\newtheorem{theorem}{Theorem}
\newtheorem{lemma}{Lemma}
\newtheorem{definition}{Definition}
\newcommand\figref{Fig.~\ref}
\begin{document}

\title{Online Signed Sampling of Bandlimited Graph Signals}
\author{Wenwei Liu, Hui Feng \IEEEmembership{Member,~IEEE,} Feng Ji, Bo Hu \IEEEmembership{Member,~IEEE}
\thanks{

  Wenwei Liu, is with the School of Information Science and Technology, Fudan University, Shanghai 200433, China (e-mail: wwliu21@m.fudan.edu.cn).
  
 Hui Feng and Bo Hu are with the School of Information Science and Technology, Fudan University, Shanghai 200433, China, and also with Shanghai Institute of Intelligent Electronics \& Systems, Shanghai 200433, China (e-mail: hfeng@fudan.edu.cn; bohu@fudan.edu.cn).

 Feng Ji is with the School of Electrical and Electronic Engineering, Nanyang Technological University, Singapore (e-mail: jifeng@ntu.edu.sg).

A preliminary work has been published in ICASSP 2022 \cite{liu2022recovery}.

\copyright 2024 IEEE. Personal use of this material ispermitted.Permission from IEEE must beobtained for all other uses, in any current orfuture media, including reprinting/republishing this material for advertising orpromotional purposes,creating new collectiveworks, for resale or redistribution to serversor lists, or reuse of any copyrightedcomponent of this work in other works.

Digital Object Identifier 10.1109/TSIPN.2024.3356794
}
}

\markboth{Journal of \LaTeX\ Class Files}%
{Shell \MakeLowercase{\textit{et al.}}: A Sample Article Using IEEEtran.cls for IEEE Journals}


\maketitle

\begin{abstract}
The theory of sampling and recovery of bandlimited graph signals has been extensively studied. 
However, in many cases, the observation of a signal is quite coarse. For example, users only provide simple comments such as ``like'' or ``dislike'' for a product on an e-commerce platform. This 
is a particular scenario where only the sign information of a graph signal can be measured. 
In this paper, we 
are interested in how to sample 
based on sign information in an online manner, by which the direction of the original graph signal can be estimated. 
The online signed sampling problem of a graph signal can be formulated as a Markov decision process in a finite horizon. Unfortunately, it is intractable for large size graphs. 
We propose a low-complexity greedy signed sampling algorithm (GSS) as well as a stopping criterion.
Meanwhile, we prove that the objective function is adaptive monotonic and adaptive submodular, so that the performance is close enough to the global optimum with a lower bound. 
 Finally, we demonstrate the effectiveness of the GSS algorithm by both synthesis and realworld data. 

\end{abstract}

\begin{IEEEkeywords}
Graph signal processing, sign information, Markov decision process, online sampling.
\end{IEEEkeywords}

\section{Introduction}
\label{intro}
\IEEEPARstart{I}{n} practice, data is often distributed on irregular topologies, such as transportation networks, social networks, and sensor networks \cite{stankovic2019understanding,zhou2004regularization,zhang2020deep,zhang2024graph}. In order to analyze and process such data in view of associated network topology, graph signal processing (GSP) has gained increasing popularity in recent years. It has been used in community mining, computer vision, chemical and pharmaceutical industries, and other fields\cite{shuman2013emerging,stankovic2020data,xia2021graph}. Many GSP concepts are analogous to their counterparts in traditional signal processing theory\cite{cvetkovic2009applications,gavili2017shift}. For example, like the traditional Fourier transform (FT), the graph Fourier transform (GFT) establishes a relationship between the vertex-domain and the frequency-domain of graph signals.

Sampling of graph signals is an important and fundamental research topic in GSP \cite{wang2015generalized,ortega2018graph}, whose purpose is to find a subset of vertices sufficient to give full knowledge of graph signals with prescribed properties. In general, sampling of graph signals can be divided into two categories. One is deterministic sampling, which seeks to find a sampling set that minimizes a predetermined metric function. The other is random sampling, which examines the importance of each vertex and assigns a corresponding sampling probability\cite{tanaka2020sampling,tsitsvero2016signals}. 
In many cases, random sampling needs more samples than deterministic sampling to achieve the desired accuracy in subsequent tasks\cite{tanaka2020sampling}.

Sampling of graph signals can also be categorized into offline (batched) or online (sequential) manners. 
Offline sampling solves the sampling set altogether according to the prior information, and then conducts subsequent observations on this sampling set. Related methods include \cite{marques2015sampling,xie2017design}.
Sampling in an offline manner relies heavily on the prior knowledge and designs the sampling set in advance using only the information of graph topology and signal prior. 
Online sampling selects samples one by one, which makes full use of the information from historical observations during the sampling process. Related works can be found in Section~\ref{related:online}. In contrast, online sampling takes more information into account for timely evaluation of the 
unsampled vertices and makes more explicit choices, which may result in a smaller sample size.

Signal recovery from sampled data is also a widely studied problem \cite{stankovic2019understanding}, and our focus in this paper revolves around reconstructing the signal's direction, not the entire signal. 
The ability to determine signal direction holds significant relevance in numerous practical scenarios. For example, when the amplitude of the signal in some situations is available, the entire signal can be deduced by knowing only the direction of the signal. A typical example is that in a power grid system, where the total load can be measured, estimating the direction of the signal becomes pivotal to knowing the specific signal value assigned to each vertex \cite{fang2011smart}. Additionally, there exist scenarios where the signal's amplitude bears little importance. For example, in a commodity network, if the purpose is to determine the ranking of commodity scores, then the signal amplitude is less relevant, and only estimating the signal direction is sufficient \cite{falk2019practical}. 

In many practical scenarios, we only have coarse information about the signal of interest. 
For example, in a goods rating system, suppose we model the collection of all goods by a graph, where edges connect similar goods. The scores of the goods are viewed as the graph signal \cite{falk2019practical}. It is usually difficult for customers to give accurate scores. Therefore, customers usually only provide simple comments on the products after consumption, such as ``thumbs up'', ``thumbs down'' and ``indifference''. 

Moreover, making comparisons is much easier than rating. 
If customers are asked to choose between A and B in terms of preference, then ``A is better'', ``B is better'' and ``A is about the same as B'' correspond to the score of A being higher, lower and the same as that of B. In the above scenario, the only information we have is 
whether the signal of a vertex is higher than 
 that of a neighbor.

 Such quantized information above mentioned are two examples of ``sign information''. 
It will be useful to make reliable product recommendations 
if the signal can be estimated from such partial sign information. 
Motivated by the example, in this paper, we explore how to perform sampling and recover the direction of 
a bandlimited graph signal based on sign information of partial vertices and edges.

Sign information of vertices and edges reflects the signal polarity on each vertex and the ordinal relations of signal magnitude in each local neighborhood. On the other hand, bandlimitedness encodes the variation pattern of the signal. The combination of sign information and bandlimitedness allows us to obtain a qualified estimation of an unknown signal when the accuracy of the original signal is not very demanding.

Our main contributions are two folds. On the one hand, 
we provide the performance analysis of the GSS algorithm by giving a lower bound on model performance. 
On the other hand, we develop a stopping criterion of the GSS algorithm to avoid redundant samples.

The rest of this paper is organized as follows. In Section~\ref{sec:related}, 
we review some related work, mainly about: the recovery of 1-bit quantized signals and online sampling of graph signals. 
In Section~\ref{sec:model} and Section~\ref{sec:process}, we describe the model setup and formulate the problem. 
In Section~\ref{sec:Algorithm}, we describe the GSS algorithm and the recovery algorithm called unital projection onto convex sets (UPOCS). 
In Section~\ref{theory}, we make theoretical discussions. We interpret the online sampling process of a graph signal as a Markov decision process (MDP). Taking the size of the feasible region as the objective function, we prove its adaptive monotonicity and adaptive submodularity, which leads to a performance guarantee. We provide the stopping criterion of the GSS algorithm, with theoretical analysis. We verify the effectiveness of the GSS algorithm with both synthetic and real datasets in Section~\ref{sec:experiments} and conclude in Section~\ref{sec:conclusion}.

In order to present most frequently used concepts and corresponding symbols more intuitively, we provide a notation table, as shown in Table \ref{tab:notation}. 

\begin{table}[htbp]
\normalsize
	\centering
	\caption{NOTATIONS}
	\begin{tabular}{l l}  
			\hline 
			Notation                & Description         \\
			\hline
            $\boldsymbol{x}$        & a bandlimited graph signal\\
            $B$ & bandwidth \\
            $\boldsymbol{h}$        & frequency coefficients\\
            $\boldsymbol{y}$        & sign information of samples\\
            $\boldsymbol{U}_B$      & the GFT bases indexed by the passband \\
            $\mathbf{\Psi}$  & the joint sampling matrix \\
            $\mathcal{S}$  &  a sampling set  \\
            $a_{0:t}$  & the sampling sequence from time $0$ to $t$ \\
            $y_{0:t}$  & the sign information sequence of $a_{0:t}$ \\
            $\mathcal{A}_t$  &  the set of all unsampled vertices and edges at \\
              & time $t$ \\
            $O_t$  &  the observation sequence at time $t$  \\
            $\mathcal{J}_t$   &  the feasible region at time $t$\\
            $\hat{\mathcal{J}}_t$  &  the convex hull of $\mathcal{J}_t \cup \{\boldsymbol{0}\}$, whose volume is \\ 
            & used to measure the size of $\mathcal{J}_t$    \\
            $\mathcal{H}_{a_t}$  & a hyperplane determined by the sample $a_t$ \\
            $\text{Vol}(\cdot)$  &  the volume of the given body \\
            $\boldsymbol{\psi}_{a_t}^\top$   &  a row of $\boldsymbol{\Psi}$ corresponding to the sample $a_t$   \\
            $\Phi(\cdot)$  &  the expected reward of a sampling sequence  \\
            $\mathcal{Z}$  & the set of EVs \\
            $\Delta(\cdot \mid \cdot)$    & the conditional expected marginal benefit of a \\ 
             & sample given an observation sequence \\
            $\boldsymbol{P}$ & a projection matrix \\
            $\delta$        & the (average) error in angle \\
			\hline
	\end{tabular}
	\label{tab:notation}
\end{table}

\section{Related Works}
\label{sec:related}
\subsection{Recovery of Signals from 1-bit Quantization}
\label{ssec:1-bit recovery}
Graph signal recovery using sign information is related to the recovery of 1-bit quantized signals. Related literature can be divided into two categories, random sensing, and direct observation, according to the type of sensing matrix. 

For the case of random sensing, combined with compressed sensing, the bound of the required number of measurements was investigated, with which universal exact recovery is possible for sparse signals with bounded dynamic range \cite{bansal2022universal}.
Beheshti \textit{et al}. studied how to set the best adaptive threshold for dictionary-sparse signals, and proposed an iterative optimization method for recovery \cite{beheshti2022adaptive}. Jacques \textit{et al}. propounded a binary iterative hard threshold algorithm and discussed recovery error bounds for spectral sparse signals in the ideal case \cite{jacques2013robust}. 
Deep learning methods have also been advanced in recent years, for example, Zeng \textit{et al}. proposed a deep unfolded network for the case where there is no priori information about the sensing matrix to jointly optimize the sensing matrix and the signal \cite{zeng2022one}. Tachella and Jacques formulated a self-supervised framework that uses operator consistency as the pretext, combined with measurement consistency to jointly design the loss function \cite{tachella2023learning}. 
Moreover, when dealing with complex signals, there is a special class of problems, 1-bit phase retrieval, which only considers whether the amplitude of the signal exceeds the threshold. The effect of the number of measurements on recovery is considered and a sub-conjugate gradient method to iteratively recover the signal of interest is presented \cite{eamaz2022one}. Kishore and Seelamantula determined the threshold according to the statistical prior knowledge and proposed the Wirtinger Flow recovery algorithm based on the gradient descent method \cite{kishore2020wirtinger}. 

For direct observation, the signal of interest at different time steps can be estimated by ML approach using adaptive thresholding \cite{khobahi2018signal}. Goyal and Kumar considered the AWGN dithering and selected the vertices with the largest contribution to the signal energy, then a recursive recovery algorithm was developed which theoretically proved that the recovery error is bounded by the contraction mapping theorem \cite{goyal2018estimation}. As we can see, these studies consider dithering or adaptive thresholds with some prior knowledge to provide additional information for signal recovery.

As for the recovery using sign information on edges, it can be traced back to the studies on recovering signals from zero crossings. 
The zero crossings of a signal, which indicate sign changes of the signal value between adjacent sampling points, reflect the oscillatory property of the signal. Logan proved in 1977 that the direction vector of a traditional discrete signal can be uniquely determined by its zero crossings if the bandwidth is less than one octave \cite{logan1977information}. Since then, plenty of subsequent studies have extended this conclusion and applied it to the field of image reconstruction in practice, such as \cite{rotem1986image}. 
Similarly, the phase of a signal implicitly indicates the variation pattern of a signal within its neighborhood\cite{oppenheim1981importance}. In \cite{dhar2019tests}, phase parameters of chirp signals are determined by means of hypothesis testing, and theoretical results are given for finite samples and asymptotics, respectively. In \cite{wu2008subspace}, a novel subspace-based algorithm is proposed to decompose the estimation of polynomial phases into subproblems and to estimate the polynomial coefficients in conjunction with a multiple signal classification approach.
In contrast, the sign information on edges studied in this paper is more intuitive than phase and more universal than zero crossings, generalizing to non-Euclidean spaces. 

The above works are mainly for traditional discrete signals except \cite{goyal2018estimation}. However, only the sign information on vertices is considered in \cite{goyal2018estimation}. 
Different from the above-mentioned works, we consider the recovery problem of unital bandlimited graph signals, with sign measurements on both vertices and edges.
At the same time, the impact of sampling on recovery is also considered.

\subsection{Online Sampling of Graph Signals}
\label{related:online}

Online sampling is to select elements from a set sequentially, where at each sampling step, we determine the next sample based on known samples and historical observations.
In GSP, online sampling is widely used and always efficient to solve the problem of sampling continuous graph signals. 

A classical approach is based on minimizing the recovery error covariance, including the average error and the worst-case error \cite{chamon2016near,wu2018greedy}. These methods are theoretically supported by submodularity or approximate submodularity. 
In addition, the algorithms of maximizing cutoff frequency based on the sampling theorem of bandlimited graph signals are also studied, which are often greedily implemented to avoid high complexity \cite{anis2014towards,gadde2015probabilistic}. Accordingly, some simplified versions have been proposed, such as \cite{ jayawant2018distance,tzamarias2018novel}. 

There are also some approaches that treat graph signals as stochastic models, often with certain probability assumptions, such as active sampling algorithms based on uncertainty criterion. These methods sequentially find the desired samples by adding the vertex on which the signal is the most uncertain \cite{lin2019active}, or the vertex that may cause the largest change in the existing estimation to the sampling sequence step by step\cite{berberidis2017active}. Although active sampling may be intuitively more efficient, it is usually difficult to find theoretical support.

To the best of our knowledge, there is rare research concerning online sampling based on sign information. 
Our previous work \cite{liu2022recovery} did a preliminary exploration without theoretical analysis.
Now we generalize this problem to jointly signed sampling of vertices and edges, and not only propose an improved online decision algorithm, but also provide theoretical justification using the framework of adaptive submodularity.

\section{Graph signals and sign information}
\label{sec:model}
In this section, we first describe the basic setup of the central problem and then introduce a key concept to study the problem.

\subsection{System Model}
\label{ssec: system model}
Consider a connected and undirected graph $\mathcal{G}=\{\mathcal{V},\mathcal{E},\boldsymbol{W}\}$, where $\mathcal{V}=\{v_1,v_2,\dots,v_N\}$ is the set of vertices, $\mathcal{E}$ is the set of edges, $\boldsymbol{W}\in \mathbb{R}^{N\times N}$ is the weighted adjacency
matrix with $\boldsymbol{W}_{ij} >0$ 
if $v_i$ and $v_j$ are connected, and $0$ otherwise. The degree of a vertex $v_i$ is $\mathit{d_i} = \sum_{j}\boldsymbol{W}_{ij}$, and the degree matrix $\boldsymbol{D}$ is defined as $\boldsymbol{D} = \text{diag}(d_1,d_2,\dots,d_N)$. 

Recall that a graph signal can be described by a vector $\boldsymbol{x} \in \mathbb{R}^N$ with the $i$-th component corresponding to the value assigned to $v_i$. 
The signal $\boldsymbol{x}$ can be transformed to the frequency-domain by the GFT: $\bar{\boldsymbol{x}} = \boldsymbol{U}^\top\boldsymbol{x}$, and $\bar{\boldsymbol{x}}$ can be transformed back to vertex-domain by the IGFT: $\boldsymbol{x} = \boldsymbol{U}\bar{\boldsymbol{x}}$. 
$\boldsymbol{U}$ can be any orthonormal matrix, and a common choice is the eigenvector matrix of a graph shift operator, such as the graph Laplacian $\boldsymbol{L} = \boldsymbol{D}-\boldsymbol{W}$ and the normalized Laplacian matrix $\boldsymbol{L}_n = \boldsymbol{D}^{-\frac{1}{2}}(\boldsymbol{D}-\boldsymbol{W})\boldsymbol{D}^{-\frac{1}{2}}$ \cite{shuman2013emerging}.

A graph signal $\boldsymbol{x}$ is bandlimited if there exist indices $f_1 < f_2 < \dots < f_B \leq N\ $ such that $\bar{\boldsymbol{x}}_k = 0$ 
for all $k \not \in \{f_1,f_2,\dots,f_B\}$. The support of $\bar{\boldsymbol{x}}$, i.e., $\{f_1,f_2,\dots,f_B\}$, is the passband of $\boldsymbol{x}$, and $B$ is the corresponding bandwidth. It can be verified that such a bandlimited graph signal can be represented as $\boldsymbol{x} = \boldsymbol{U}_B \boldsymbol{h}$, where $\boldsymbol{U}_B$ is the $N \times B$ submatrix of $\boldsymbol{U}$ with columns indexed by the passband, and $\boldsymbol{h}$ is the corresponding frequency coefficients.

As we all know, for a graph signal $\boldsymbol{x}\in \mathbb{R}^N$, the signal value on vertex $v_i$ is the $i$-th element of $\boldsymbol{x}$. 
Moreover, we also introduce the notion of the signal value on an edge $e_i = (v_p, v_q)$, which is defined as the difference of the $p$-th and $q$-th element of $\boldsymbol{x}$, i.e., $x_p - x_q$. 
For example, in a goods rating system where the products are viewed as the vertices and the scores are viewed as the graph signal, the signal value on an edge can reflect the difference in customer preference for the two products. 
In the paper, we specify that the difference on the vertex-pair is made by subtracting the component of the larger subscript from the component of the smaller subscript. The signal value on all the edges can be expressed as
\begin{equation*}
    \boldsymbol{x}_\mathcal{E} = \boldsymbol{\Xi}\boldsymbol{x},
\end{equation*}
where $\boldsymbol{\Xi} \in \mathbb{R}^{|\mathcal{E}|\times N}$ is the incidence matrix with entries $\Xi_{ip} = 1$, $\Xi_{iq} = -1$ if the $i$-th edge is $e_i=(v_p,v_q)$, and $0$ otherwise. 

Sampling of $M_1$ vertices can be described as follows
\begin{equation*}
    \boldsymbol{x}_v = \boldsymbol{\Psi}_\mathit{v} \boldsymbol{x}.
\end{equation*}
In the above formula, $\boldsymbol{\Psi}_\mathit{v}\in\mathbb{R}^\mathit{M_1\times N}$ is the sampling matrix in vertex-domain, with entries $(\boldsymbol{\Psi}_\mathit{v})_\mathit{ij}=1$ if the $\mathit{i}$-th sample is vertex $v_\mathit{j}$, and $0$ otherwise. $\boldsymbol{\Psi}_\mathit{v}$ can be considered as the result of row selection on the identity matrix $\boldsymbol{I}_N$.
Sampling of $M_2$ edges can be defined similarly, which can be described as 
\begin{equation*}
    \boldsymbol{x}_e = \boldsymbol{\Psi}_\mathit{e} \boldsymbol{x}.
\end{equation*}
In the above formula, $\boldsymbol{\Psi}_e \in \mathbb{R}^{M_2\times N}$ has entries $(\boldsymbol{\Psi}_\mathit{e})_{ip} = 1$, $(\boldsymbol{\Psi}_\mathit{e})_{iq} = -1$ if the $i$-th sample is edge $(v_p,v_q)$, and $0$ otherwise. Similarly, $\boldsymbol{\Psi}_\mathit{e}$ can be considered as the result of row selection on the incidence matrix $\boldsymbol{\Xi}$. Intuitively, sampling in the vertex-domain is to pick the elements in $\boldsymbol{x}$ to form the sampled signal, while sampling in the edge-domain is to pick the elements in $\boldsymbol{x}_\mathcal{E}$ to form the sampled signal.

In this paper, we consider the joint sampling of vertices and edges, which means
that sampling requires one to select subsets of vertices and edges
to form the sampling set $\mathcal{S} \subset (\mathcal{V} \cup \mathcal{E})$. This simply involves concatenating samples from the vertex-domain with samples from the edge-domain, which can be expressed as 
\begin{equation*}
    \boldsymbol{x}_\mathcal{S} =\begin{bmatrix}
     \boldsymbol{\Psi}_\mathit{v}\\
     \boldsymbol{\Psi}_\mathit{e}
     \end{bmatrix}\boldsymbol{x}.
\end{equation*}
Denote the sampling budget by $M=M_1+M_2$, and the sampling matrix by $\boldsymbol{\Psi}=[\boldsymbol{\Psi}_v^\top,\boldsymbol{\Psi}_e^\top]^\top$, then the joint sampling in vertex-domain and edge-domain of a bandlimited graph signal can be rewritten as
\begin{equation*}
    \boldsymbol{x}_\mathcal{S} = \boldsymbol{\Psi}\boldsymbol{x} = \boldsymbol{\Psi}\boldsymbol{U}_B\boldsymbol{h}.
\end{equation*}
The signed sampling 
is to obtain the sign information by taking the sign of the sampled signal values, which can be described as
\begin{equation}
  \boldsymbol{y}=\text{sgn}\left(\boldsymbol{x}_\mathcal{S}\right)=\text{sgn} \left(\boldsymbol{\Psi}\boldsymbol{U}_B\boldsymbol{h}\right),
  \label{sign_sampling}
\end{equation}
where $\text{sgn}(\cdot)$ is
\begin{equation*}
  \text{sgn}(x)=
  \left\{
    \begin{array}{lr}
      -1      & x<0 \\
      1      & x>0 \\
      0                 & \text{otherwise}.
    \end{array}
  \right.
\end{equation*}
The signed sampling is a map from $\mathbb{R}^{B}$ to $\{-1,1,0\}^M$ according to (\ref{sign_sampling}). 

Due to the fact that the sign information does not depend on the magnitude of the signal, it only makes sense to recover the direction of $\boldsymbol{x}$\footnote{In practice, we often have prior knowledge about the magnitude of the signal, like in a goods rating system, where we know what the range of scores is. Sometimes we do not need to know the exact magnitude of the signal while knowing the ranking of the signal values on different vertices is sufficient.}. Hence, we make the \emph{unital assumption} that $\|\boldsymbol{x}\|_2=1$. 
Since the columns of $\boldsymbol{U}_B$ are orthonormal, we also have $\|\boldsymbol{h}\|_2=1$. In this paper, we are interested in recovering signal direction using sampled sign information. 

As illustrated in \figref{fig:1}, the central problem in this paper is to determine a sampling sequence where we take sign information, and subsequently recover the direction of the original bandlimited signal using the sign information acquired. This issue we are concerned with is abstracted from real-life scenarios, but few relevant methods are available.

\begin{figure}[h]
    \centering
    \includegraphics[scale=0.27]{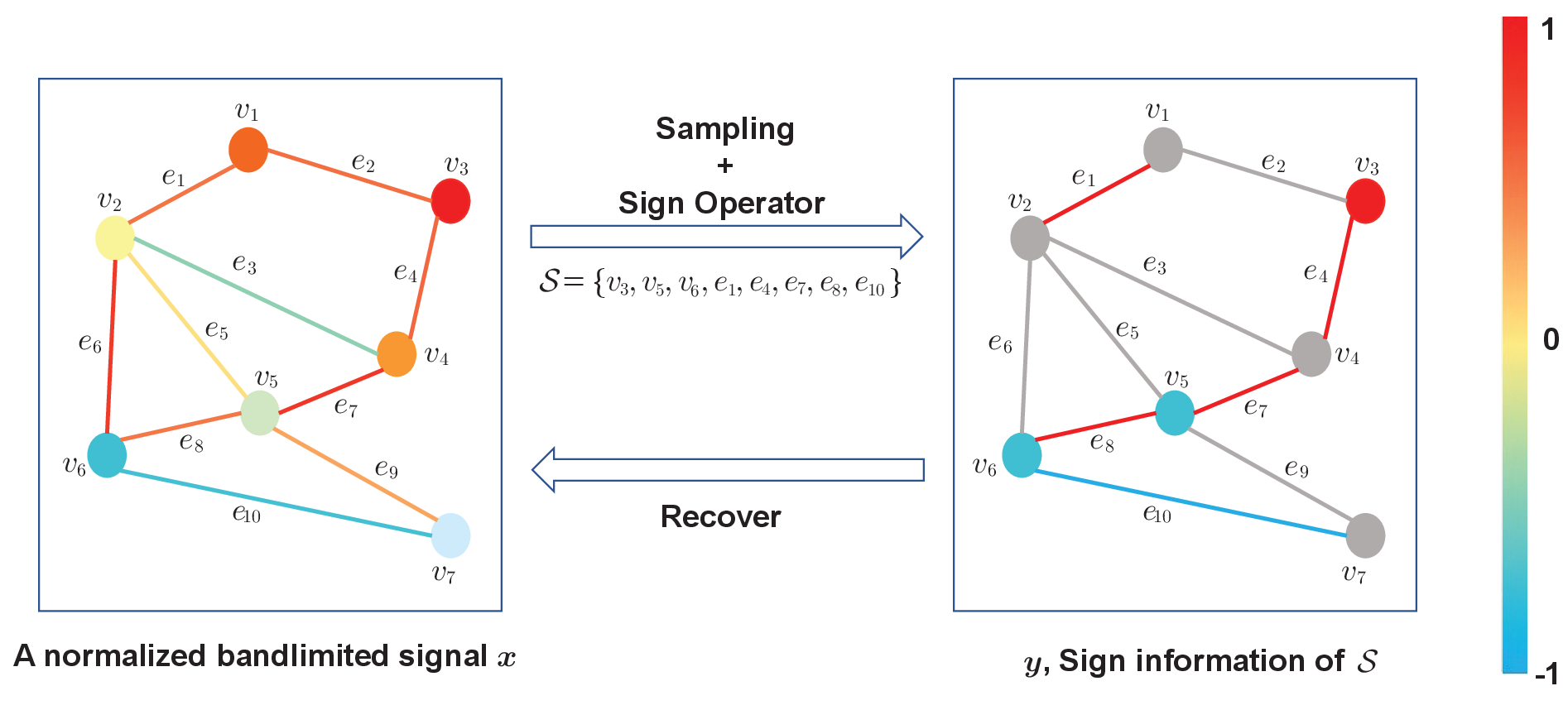}
    \caption{\label{fig:1} Sampling a signal on partial vertices and edges, the direction can be recovered from the sign information of the samples. In the example shown, the figure on the RHS only shows the signs of observed vertices and edges, which are used to recover the normalized signal on the LHS.}
\end{figure}

\subsection{Feasible Region}
\label{sec:feasible}

Denote the direction of the original graph signal by $\boldsymbol{x}^*=\boldsymbol{U}_B\boldsymbol{h}^*$, the sampling matrix by $\boldsymbol{\Psi}$, the set of samples by $\mathcal{S}$, the sign information by $\boldsymbol{y}$, and the passband of $\boldsymbol{x}^*$ by $\{f_1,f_2,\dots,f_B\}$. Since $\boldsymbol{U}_B$ is known, it is equivalent to recovering $\boldsymbol{h}^*$ instead of $\boldsymbol{x}^*$. 

Suppose our estimation of $\boldsymbol{h}$ is $\hat{\boldsymbol{h}}$. It is intuitive that $\hat{\boldsymbol{h}}$ should be consistent, i.e., $\text{sgn}(\boldsymbol{\Psi U}_B\hat{\boldsymbol{h}})=\boldsymbol{y}$. The condition is equivalent to the following constraints on $\hat{\boldsymbol{h}}$. For $i=1,2,\dots,M$, we have
\begin{equation}
    \label{C_v_1} 
    \begin{aligned}
        &\boldsymbol{\psi}_i^\top\boldsymbol{U}_B \hat{\boldsymbol{h}}< 0\quad \text{if}\ y_i < 0,\\
        &\boldsymbol{\psi}_i^\top\boldsymbol{U}_B \hat{\boldsymbol{h}}> 0\quad \text{if}\ y_i > 0,\\
        &\boldsymbol{\psi}_i^\top\boldsymbol{U}_B \hat{\boldsymbol{h}}= 0\quad \text{if}\ y_i = 0,
    \end{aligned}
\end{equation}
 where $\boldsymbol{\psi}_i^\top$ is the $i$-th row of $\boldsymbol{\Psi}$. 
 Define $\mathcal{C}$ as the intersection of sets described by $M$ constraints above, which is the finite intersection of open half-spaces and hyperplanes, then (\ref{C_v_1}) can be written as $\hat{\boldsymbol{h}}\in \mathcal{C}$.

Note that $\mathcal{C}$ is not closed, and we enlarge $\mathcal{C}$ to include the boundary points as 
\begin{equation}
    \label{sign_space_closed}
    \hat{\mathcal{C}}=\mathop{\bigcap}\limits_{\mathit{i}=1}^{\mathit{M}}\left\{\boldsymbol{w}\in\mathbb{R}^\mathit{B}\mid \,\boldsymbol{\psi }_\mathit{i}^\top\boldsymbol{U}_B\boldsymbol{w}\lesseqqgtr y_i\right\},
\end{equation}
where $x\lesseqqgtr a, a\in \mathbb{R}$ is defined as a constraint on $x$ depending on the value of $a$ as follows: 
\begin{equation*}
    \begin{aligned}
      \text{if}\ a<0, \text{ then $x$ must satisfy } x\leq 0; \\
      \text{if}\ a>0, \text{ then $x$ must satisfy } x\geq 0; \\
      \text{if}\ a =0, \text{ then $x$ must satisfy } x=0.
    \end{aligned}
\end{equation*}

The set $\hat{\mathcal{C}}$ is a closed convex cone in $\mathbb{R}^B$, as it is the finite intersection of closed half-spaces and hyperplanes. 
By the property of convex cones, when the number of half-spaces is sufficient enough, any vector in $\hat{\mathcal{C}}$ can be expressed as a linear combination of unit norm extreme vectors
with non-negative coefficients\cite{barker1973lattice}, i.e., we have 
\begin{equation}
    \label{EV}
    \hat{\mathcal{C}} = \left\{\sum_{i=1}^r k_i \boldsymbol{z}_i \mid \, k_i \geq 0, i=1,2,\dots,r \right\},
\end{equation}
where $\mathcal{Z}=\{\boldsymbol{z}_i\}_{i=1}^r$ is the set of extreme vectors (EVs) 
of unit norm in $\hat{\mathcal{C}}$. 

Recall that we have the extra condition $\|\boldsymbol{h}^* \|_2 = 1$ from the unital assumption. Hence, the  
 feasible region is the intersection of the convex cone $\hat{\mathcal{C}}$ and the unit sphere, expressed as
 \begin{equation*}
     \mathcal{J}=\hat{\mathcal{C}}\ \bigcap \  \{\boldsymbol{w} \in \mathbb{R}^B \mid \|\boldsymbol{w}\|_2=1 \}.
 \end{equation*}
To see an example with $B=3$, suppose we have already observed the sign information of $4$ samples.
By (\ref{sign_space_closed}), $\hat{\mathcal{C}}$ is obtained from $4$ constraints. Then the feasible region $\mathcal{J}$ is the intersection of $\hat{\mathcal{C}}$ and the unit sphere. \figref{fig:4}(a) gives an illustration of the feasible region in $\mathbb{R}^3$, where $\mathcal{J}$ (the gray region) is the intersection of $4$ halfspaces and the unit sphere. 

\emph{Key insight}: We want to leverage the intuition that the smaller the volume the feasible region $\mathcal{J}$ has, the smaller the possible value range of the signal is. Hence, we have a better chance to obtain an accurate recovery signal. Moreover, even if all of the sign information is observed, a perfect recovery is basically impossible. It is unlikely that the feasible region relying on sign information alone can shrink to a single point.

\section{The Problem of Online Sampling}
\label{sec:process}

In Section \ref{sec:model}, we propose the definition of obtaining the sign information of a graph signal with a given sampling set and discuss how to describe the corresponding feasible region. Through Section \ref{sec:model}, we convert the sign information of given samples into the conditions that the signal needs to meet.

In this section, we focus on how to solve the sampling problem. As mentioned earlier in Section~\ref{intro}, there are generally two ways to find a sampling set: offline and online. Offline sampling determines the whole sampling set altogether, which is based only on the objective function and prior information.
In other words, sampling in an offline manner cannot effectively utilize historical observations of the samples. Online sequential sampling can choose the next sample using the information gathered from the acquired samples. In contrast, online sampling takes into account more information and the selection of each sample is more comprehensive. 
Therefore, online sampling is more advantageous in terms of performance, and we try to determine the sampling set in an online manner.


\subsection{Sequential Evolution of the Feasible Region}
\label{problem_1}

Suppose there are $t$ observed samples $a_0, a_1, \dots, a_{t-1}$ with each $a_i$ either a vertex or an edge of $\mathcal{G}$. Suppose the associated sampling matrix is  $\boldsymbol{\Psi}=[\boldsymbol{\psi}_{a_0},\dots,\boldsymbol{\psi}_{a_{t-1}}]^\top\in \mathbb{R}^{t\times N}$ and the feasible region is $\mathcal{J}$. For a new sample $a_t$ with sign $y_t$, we may update $\boldsymbol{\Psi}$ by attaching a new row $\boldsymbol{\psi}_{a_t}^\top$. If $a_t$ is a vertex $v_k$, then $\boldsymbol{\psi}_{a_t}^\top$ is the $k$-th row of the identity matrix $\boldsymbol{I}_N$; while if $a_t$ is an edge $e_k$, then $\boldsymbol{\psi}_{a_t}^\top$ is the $k$-th row of the incidence matrix $\boldsymbol{\Xi}$. 
Correspondingly, the region $\mathcal{J}$ evolves as 
\begin{equation}
    \mathcal{J} \mapsto \mathcal{J}\bigcap \left\{\boldsymbol{w}\in\mathbb{R}^\mathit{B}\mid\,\boldsymbol{\psi}_{a_t}^\top\boldsymbol{U}_B\boldsymbol{w}\lesseqqgtr y_t\right\}.
    \label{next_hat_C}
\end{equation}
With each additional observed sample, the recovery signal needs to satisfy one more consistency constraint. Equivalently, selecting and observing a sample intersects a half-space or a hyperplane with the current feasible region, causing the feasible region to shrink once. A \emph{challenge} is that for each unsampled vertex or edge, we do not have access to its sign and hence the exact knowledge of how $\mathcal{J}$ evolves, before making the observation. 

We illustrate with a specific example shown in \figref{fig:4} the evolution of the feasible region for a new sample. 
\figref{fig:4}(a) and \figref{fig:4}(b) are the visualizations of the feasible regions before and after sampling $a_t$ respectively. \figref{fig:4}(c) is the top view of \figref{fig:4}(b). 
The gray region in \figref{fig:4}(a) represents the current feasible region $\mathcal{J}$, and the points A to D represent the EVs of $\hat{\mathcal{C}}$. The red line in \figref{fig:4}(b) and \figref{fig:4}(c) represents the hyperplane $\mathcal{H}=\{\boldsymbol{w}\in\mathbb{R}^\mathit{B}\mid\,\boldsymbol{\psi}_{a_t}^\top\boldsymbol{U}_B\boldsymbol{w}=0\}$. Suppose the part of $\mathcal{J}$ on the left side of $\mathcal{H}$ corresponds to $\mathcal{J}\cap \{\boldsymbol{w}\in\mathbb{R}^\mathit{B}\mid\,\boldsymbol{\psi}_{a_t}^\top\boldsymbol{U}_B\boldsymbol{w}\leq 0 \}$, while the other part corresponds to $\mathcal{J}\cap \{\boldsymbol{w}\in\mathbb{R}^\mathit{B}\mid\,\boldsymbol{\psi}_{a_t}^\top
\boldsymbol{U}_B\boldsymbol{w}\geq 0 \}$, denoted by $\mathcal{J}^1,\ \mathcal{J}^2$ respectively. We can see that $\mathcal{J}$ is separated by $\mathcal{H}$, and the next feasible region is uncertain, which could be $\mathcal{J}^1$, $\mathcal{J}^2$ or $\mathcal{J}\cap\mathcal{H}$, depending on the sign of $a_t$. For each unsampled vertex or edge, we have such a figure except that $\mathcal{J}$ is divided by different hyperplanes.

\begin{figure}[h]
    \centering
    \includegraphics[scale=0.27]{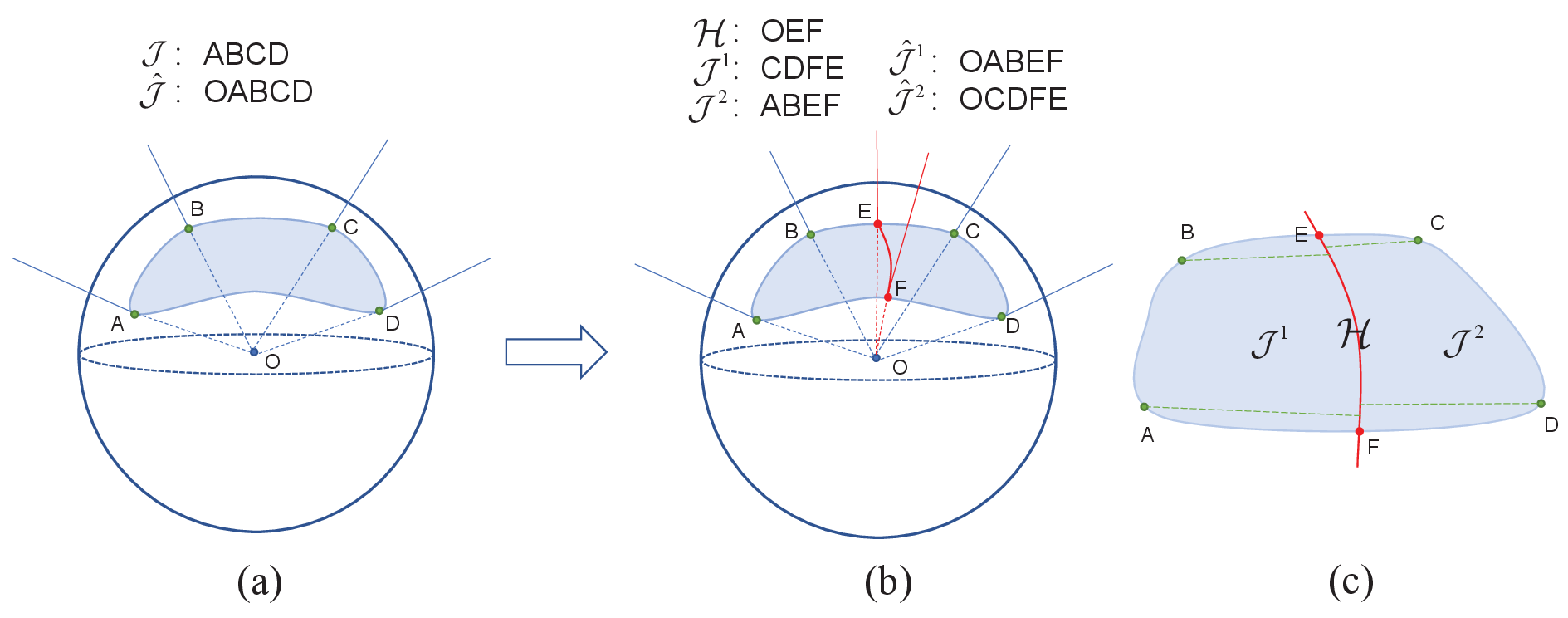}
    \caption{\label{fig:4} An example of $\mathcal{J}$ constructed by 4 halfspace constraints and the unital constraint with $a_t$ as the next sample in 3D space.} 
\end{figure}

As we have pointed out, we want to minimize the ``size'' of the feasible region. However, $\mathcal{J}$ has zero Euclidean volume. As an alternative, we define 
\begin{equation}
\label{body}
   \mathcal{B}_B = \{\boldsymbol{w} \in \mathbb{R}^B \mid \|\boldsymbol{w}\|_2 \leq 1 \},
\end{equation} 
and consider
\begin{equation}
\label{eq:hmh}
   \hat{\mathcal{J}} = \hat{\mathcal{C}}\ \bigcap \  \mathcal{B}_B.
\end{equation} 
It is the convex hull of $\mathcal{J}\cup \{\boldsymbol{0}\}$. There is a one-to-one correspondence between $\mathcal{J}$ and $\hat{\mathcal{J}}$. 
For example, in \figref{fig:4}, $\mathcal{J}$ is the face ABCD, while $\hat{\mathcal{J}}$ is the convex body OABCD, with $\mathcal{J}$ as the base. \emph{We use $\text{Vol}(\hat{\mathcal{J}})$, the volume of $\hat{\mathcal{J}}$ in $\mathbb{R}^B$, to measure the size of the feasible region.}

Significantly, although the observations are generated in an online manner, the feasible region according to (\ref{sign_space_closed}) does not depend on the time sequence. As long as the set of observation samples is the same, the resulting feasible region is the same, and the order of observations does not matter. This point will continue to be mentioned in later sections.

\subsection{The Online Sampling Decision Problem}
\label{problem_MDP}
The online sampling process can be modeled as an MDP and we use terminologies such as ``state'', and ``action'' from MDP theory \cite{bather2000decision}. In our model, the convex set $\hat{\mathcal{J}}$ at time step $t$ can be viewed as the state, constructed from the sampling sequence $a_{0:t-1}=\{a_0,a_1,\dots,a_{t-1}\}$ and the corresponding sign information $y_{0:t-1}=\{y_0,y_1,\dots,y_{t-1}\}$. Denote the observation sequence pair $(a_{0:t-1},y_{0:t-1})$ by $O_t$. We introduce the notation $\hat{\mathcal{J}}(O_t)$ for the state at time step $t$ or simply $\hat{\mathcal{J}}_t$ for convenience. 
Accordingly, denote the initial state as $\hat{\mathcal{J}}(\emptyset)$, or $\hat{\mathcal{J}}_0$.
The action space $\mathcal{A}_t$ is the set of all unsampled vertices and edges at time step $t$, i.e., the vertices and edges that are not included in $a_{0:t-1}$. 

As discussed before, once taking an action $a_t$ at state $\hat{\mathcal{J}}_t$, the feasible region would randomly transit to a new one $\hat{\mathcal{J}}_{t+1}$. 
Considering that any vector in $\mathcal{J}_t$ may be the original signal, we can naturally define that the transition probability of the feasible region is proportional to the size, i.e., 
\begin{equation*}
    \mathbb{P}_{a_t}(\hat{\mathcal{J}}_t,\hat{\mathcal{J}}_{t+1}) :=\mathbb{P}(\hat{\mathcal{J}}_{t+1} \mid \hat{\mathcal{J}}_t, a_t)=\frac{\text{Vol}(\hat{\mathcal{J}}_{t+1})}{\text{Vol}(\hat{\mathcal{J}}_{t})}.
\end{equation*}
For example, in \figref{fig:4}, the probability that the next feasible region is $\mathcal{J}^1$ is $\frac{\text{Vol}(\hat{\mathcal{J}}^1)}{\text{Vol}(\hat{\mathcal{J}})}$.  Note that 
$\mathcal{H}$ has $0$ Lebesgue measure in its ambient space, which means that it is almost impossible for the next feasible region to be $\mathcal{J}\cap\mathcal{H}$, compared to $\mathcal{J}^1$ and $\mathcal{J}^2$. Therefore, we ignore the case that the next feasible region is $\mathcal{J}\cap\mathcal{H}$ during the analysis. If $a_t$ is chosen as the next sample and the sign of $a_t$ is really observed to be $0$, the feasible region degenerates to a lower dimensional space, with the dimension subtracted by one relative to the original space. The subsequent sampling analysis remains unchanged and is carried out in this low-dimensional space.

After observing $y_t$, the state $\hat{\mathcal{J}}_{t+1}$ can be determined. Moreover, by (\ref{next_hat_C}), $\hat{\mathcal{J}}_{t+1}$ only depends on $\hat{\mathcal{J}}_t$ and $(a_t,y_t)$. 
The transition relationship can be expressed as
\begin{equation*}
    \hat{\mathcal{J}}_0\ \xrightarrow{a_0,y_0}\ \hat{\mathcal{J}}_1\  \xrightarrow{a_1,y_1}\  \dots\  \xrightarrow{a_{t-1},y_{t-1}}\  \hat{\mathcal{J}}_t
    \ \xrightarrow{a_t,y_t}\  \hat{\mathcal{J}}_{t+1}\ \dots
\end{equation*}

Besides, the reward $R_{a_t}(\hat{\mathcal{J}}_t,\hat{\mathcal{J}}_{t+1})$ of taking an action $a_t$ at state $\hat{\mathcal{J}}_t$ with next state $\hat{\mathcal{J}}_{t+1}$ can be defined as the volume reduction from $\hat{\mathcal{J}}_t$ to $\hat{\mathcal{J}}_{t+1}$, i.e., $R_{a_t}(\hat{\mathcal{J}}_t,\hat{\mathcal{J}}_{t+1})=\text{Vol}(\hat{\mathcal{J}_t})-\text{Vol}(\hat{\mathcal{J}}_{t+1})$. The reduction is uncertain during the sample selection stage as the sign is unobserved, but we can consider the expected reward. 
Note that $\hat{\mathcal{J}}_{t+1}$ can be written as $\hat{\mathcal{J}}(O_t\cup\{(a_t,y_t)\})$, which varies with $(a_t,y_t)$. 
The expected reward of $a_t$ at state $\hat{\mathcal{J}}_t$ can be written as
    \begin{align}
    \label{expected reward}
    &\mathbb{E}[R_{a_t}(\hat{\mathcal{J}}_t)]\nonumber\\=&\text{Vol}(\hat{\mathcal{J}}_t)-\sum_{y_t=+,-} \mathbb{P}(y_t\mid O_t)\text{Vol}\left(\hat{\mathcal{J}}(O_t\cup\{(a_t,y_t)\})\right), \nonumber 
    \end{align}
where

\begin{equation}
    \label{prob_yt}
    \mathbb{P}(y_t\mid O_t) = \frac{\text{Vol}\left(\hat{\mathcal{J}}(O_t\cup\{(a_t,y_t)\})\right)}{\text{Vol}(\hat{\mathcal{J}}_t)}.
\end{equation}

Taking \figref{fig:4} as an example, the expected reward of sampling $a_t$ is 
\begin{equation*}
    \mathbb{E}[R_{a_t}(\hat{\mathcal{J}})]=\text{Vol}(\hat{\mathcal{J}})-\frac{\text{Vol}(\hat{\mathcal{J}}^1)}{\text{Vol}(\hat{\mathcal{J}})}\text{Vol}(\hat{\mathcal{J}}^1)-\frac{\text{Vol}(\hat{\mathcal{J}}^2)}{\text{Vol}(\hat{\mathcal{J}})}\text{Vol}(\hat{\mathcal{J}}^2).
\end{equation*}
Based on such a setting, given a finite sampling budget, the sampling problem can be viewed as an online decision problem and analyzed under the framework of MDP \cite{bather2000decision}.


Assume that a total of $T$ decisions are made.
Similar to the general MDP problem, we try to maximize the expectation of the total reward produced by the online decisions, i.e., 
\begin{equation*}
    \begin{aligned}
        &\mathbb{E}\left[\sum_{t=0}^{T-1} R_{a_t}(\hat{\mathcal{J}}_t,\hat{\mathcal{J}}_{t+1})\mid \hat{\mathcal{J}}_0\right]\\ =& \mathbb{E}\left[\sum_{t=0}^{T-1} \left(\text{Vol}(\hat{\mathcal{J}}_t)-\text{Vol}(\hat{\mathcal{J}}_{t+1})\right)\mid \hat{\mathcal{J}}_0\right]\\
    =& \text{Vol}(\hat{\mathcal{J}}_0) - \mathbb{E}\left[\text{Vol}(\hat{\mathcal{J}}_T) \mid \hat{\mathcal{J}_0}\right].
    \end{aligned}
\end{equation*}
Therefore, 
if the whole sampling budget is $M$, then our sampling problem can be formulated as
\begin{equation}
    \begin{aligned}
    \label{optimization}
    \mathop{\text{max}}\limits_{a_{0:T-1}}\quad &\Phi(a_{0:T-1}) = \text{Vol}(\hat{\mathcal{J}}_0)-\mathbb{E}\left[\text{Vol}(\hat{\mathcal{J}}_T)\mid \hat{\mathcal{J}}_0 \right]\\
    \text{s.t.}\quad 
      &T \leq M,
  \end{aligned}
\end{equation}
where $a_t\in \mathcal{A}_t$ for each $t$, 
and $\hat{\mathcal{J}}_T$ is the final state, depending on the sign information of $a_{0:T-1}$. Symbol $\mathbb{E}[\cdot]$ is the conditional expectation over $(a_{0:T-1},y_{0:T-1})$, conditioned on the initial state.

Generally, an MDP problem in a finite horizon can be solved by dynamic programming (DP), based on the idea of backward induction. Typical methods include value iteration and policy iteration \cite{russell2010artificial}. Unfortunately, these methods are not suitable for our problem, due to the high complexity of the state space. For a graph with $N$ vertices and $|\mathcal{E}|$ edges, the sign information on each vertex and edge could be $+,-,0$ or not yet observed.
Therefore, with each combination corresponding to a state, there could be at $O(4^{N+|\mathcal{E}|})$ possible states. In the next section, we describe an approach that is more tractable. 

\section{Online Sampling and Recovery Algorithms}
\label{sec:Algorithm}
In this section, we turn to the problem of designing an online sampling policy such that the recovery result is as accurate as possible. Moreover, we provide the UPOCS algorithm to recover the original signal using the sign information of the selected samples. 

\subsection{The Sampling Policy and the GSS Algorithm}
\label{subsec:sampling}
 To be specific, we consider an online sampling scheme, which can effectively determine the new sample using the acquired sign information of the known samples.

Our policy is to choose the vertex or edge that makes the current feasible region be divided as evenly as possible. In other words, the feasible region is divided into two parts of as equal size as possible. The purpose is to optimize the worst-case scenario. To see an example (illustrated in \figref{fig:cut}), the face ABCD is the feasible region $\mathcal{J}$. Suppose $\mathcal{J}$ is divided into CDFE and ABEF by hyperplane OEF, and is divided into CDHG and ABGH by hyperplane OGH. From \figref{fig:cut}, OCDFE and OABEF have approximately equal volumes, while OCDHG and OABGH have a larger difference in volume. For the two options, we prefer \figref{fig:cuta}.

\begin{figure}[htbp]%
    \centering
    \subfloat[A sampling action that divides the feasible region into two parts with approximately equal volumes.]{
        \label{fig:cuta}
        \includegraphics[width=0.68\linewidth]{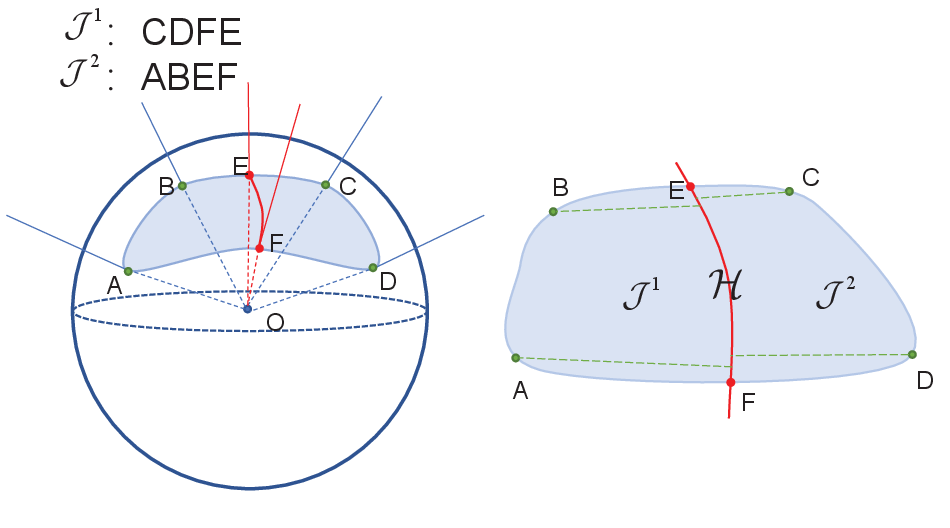}}
        \hfill
    \subfloat[A sampling action that divides the feasible region into two parts with a certain difference in volume.]{
        \label{fig:cutb}
        \includegraphics[width=0.68\linewidth]{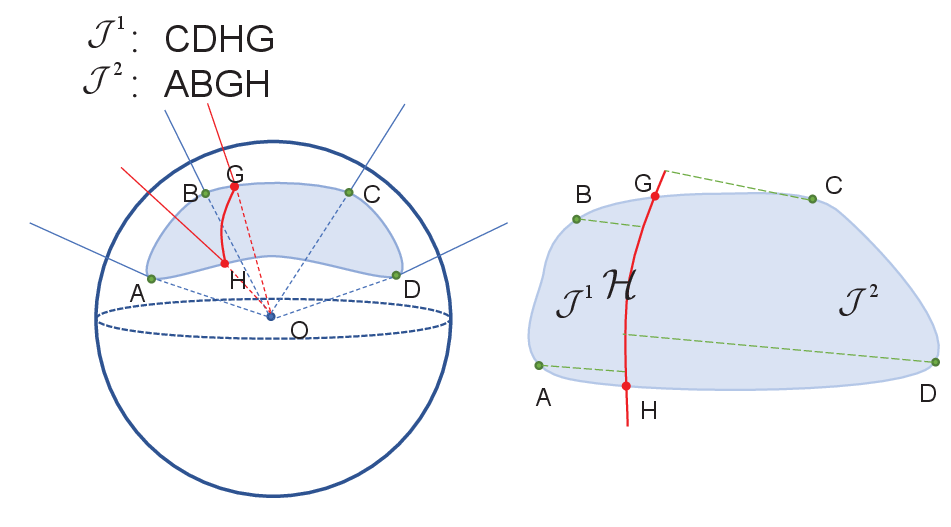}}
    \caption{ An example of the comparison of two sampling actions in 3D space.}
    \label{fig:cut}
\end{figure}

However, the volume of a convex body in a high-dimensional space is difficult to calculate exactly \cite{simonovits2003compute}, so it is necessary to apply efficient approximation methods.


We suggest an efficient approximation method, in the spirit of our previous work \cite{liu2022recovery}. Specifically, at time step $t$, consider the feasible region with extreme vectors (EVs) $\mathcal{Z}= \{\boldsymbol{z}_i\}_{i=1}^r$, and a candidate sample $a_t$ with the corresponding hyperplane $\mathcal{H}_{a_t}=\{\boldsymbol{w}\in \mathbb{R}^B \mid\ \boldsymbol{\psi}_{a_t}^\top\boldsymbol{U}_B\boldsymbol{w}=0\}$. We calculate the distance between each EV $\boldsymbol{z}$ and $\mathcal{H}_{a_t}$ as:
\begin{equation}
    \label{distance}
    d(\boldsymbol{z},\mathcal{H}_{a_t}) = \frac{\boldsymbol{\psi}_{a_t}^\top\boldsymbol{U}_B\boldsymbol{z}}{\|\boldsymbol{\psi}_{a_t}^\top\boldsymbol{U}_B\|}.
\end{equation} 
The sample is determined by
\begin{equation}
    \label{ICASSP}
    a_t^* =\mathop{\text{argmin}}\limits_{a_t \in \mathcal{A}_t}\ \left|\sum_{\boldsymbol{z} \in \mathcal{Z}} d(\boldsymbol{z},\mathcal{H}_{a_t})\right|.
\end{equation}
We explain the intuition using the example in \figref{fig:4}. The length of the dotted line in \figref{fig:4}(c) is the distance from each EV to $\mathcal{H}_{a_t}$, which can be calculated by (\ref{distance}). Note that A, B, and C, D lie on different sides of $\mathcal{H}_{a_t}$, so the sign of $d(\boldsymbol{z},\mathcal{H}_{a_t})$ for A, B and for C, D would be different. Summing them up amounts to the difference in total distance between EVs from both sides and the hyperplane. Therefore, we use (\ref{ICASSP}) 
as an approximation of the difference between the volumes of the regions on the two sides of the hyperplane. The vertex or edge with a smaller difference is preferred. 

We want to use the greedy algorithm based on the above policy. To start the process, we also need to determine the initial state $\hat{\mathcal{J}}_0$. The challenge is that there is not any EV for less than $B-1$ samples. For example, it is stated in \cite[Definition 4.2]{bertsimas1997introduction} that an extreme ray of a polyhedral cone in $\mathbb{R}^B$ is generated from $B-1$ linearly independent inequalities. Therefore, we need to initialize by finding the initial samples. 

For this, we choose the indices of the linearly independent rows with the maximum magnitude of $[\boldsymbol{U}_B^\top,(\boldsymbol{\Xi U}_B)^\top]^\top$, to form the initial sampling set. To be specific, we sort the magnitudes of the row vectors of $[\boldsymbol{U}_B^\top,(\boldsymbol{\Xi U}_B)^\top]^\top$ in descending order, select the first $B-1$ linearly independent rows, and take the corresponding vertices and edges as the initial samples. After observing the $B-1$ initial samples, only one EV can be determined.

Although we are not yet able to use our approximation method, it is possible in the selection of the $B$-th sample to produce the effect of cutting the feasible region roughly in half. 
As for the $B$-th sample, 
we search all the remaining vertices and edges and choose the vertex or edge such that the maximum angle between EVs is minimized, because such a hyperplane would be approximately in the middle of the feasible region. In this way, the feasible region can be approximately equally cut, which results that the worst effect of the $B$-th sample is not too bad. 
It needs to be acknowledged that this is an intuitively valid method of initialization, and other reasonable initialization methods are worth trying.

Once the initial $B$ samples are observed, the initial feasible region can be determined, and our greedy sampling process (\ref{ICASSP}) can be applied. Denote the set of initial samples by $\mathcal{S}_0$, 
then the action space $\mathcal{A}_B$ of our sampling decision process at the initial state is $(\mathcal{V}\cup\mathcal{E})\backslash\mathcal{S}_0$, including all the remaining unsampled vertices and edges.



\begin{algorithm}
	\renewcommand{\algorithmicrequire}{\textbf{Input:}}
	\renewcommand{\algorithmicensure}{\textbf{Output:}}
  \renewcommand{\algorithmicprint}{\textbf{Initialize:}}
	\caption{Greedy Signed Sampling (GSS).} 
	\label{alg:2} 
	\begin{algorithmic}[1]
		\REQUIRE Sampling budget $M$, passband $\{f_1,f_2,\dots,f_B\}$, graph $\mathcal{G}$
		\ENSURE Sampling sequence $\mathcal{S}$
    \STATE Get a list by sorting the magnitudes of row vectors of $[\boldsymbol{U}_B^\top,(\boldsymbol{\Xi U}_B)^\top]^\top$ in descending order.
    \STATE Find the first $B-1$ indices in the list to construct $\mathcal{S}_0$ such that the corresponding vectors are of full rank.
    \STATE Observe $\mathcal{S}_0$, $t \gets B-1$
    \FOR {$i$ not in $\mathcal{S}_0$}
        \STATE Calculate EVs $\mathcal{Z}_1,\mathcal{Z}_2$ for $y_i =+,-$. 
        \STATE $\theta_i = \text{min} (\text{min}_{\boldsymbol{z}_1,\boldsymbol{z}_2 \in \mathcal{Z}_1}\langle \boldsymbol{z}_1,\boldsymbol{z}_2\rangle,\text{min}_{\boldsymbol{z}_1,\boldsymbol{z}_2 \in \mathcal{Z}_2}\langle \boldsymbol{z}_1,\boldsymbol{z}_2 \rangle)$
    \ENDFOR
    \STATE $a_t^* = \text{argmax}_i\ \theta_i$, $\mathcal{S}_0 \gets \mathcal{S}_0 \cup \{a_t^*\}$
    \STATE Observe $a_t^*$, and calculate EVs $\mathcal{Z}$. 
    \STATE $\mathcal{S} \gets \mathcal{S}_0$, $t \gets B$
    \WHILE{$t < M$}
    \FOR {$a_t$ not in $\mathcal{S}$}
    \STATE Calculate $d(\boldsymbol{z},\mathcal{H}_{a_t})$ for each $\boldsymbol{z}\in\mathcal{Z}$ using (\ref{distance}).
    \ENDFOR
    \IF{the stopping criterion (Section~\ref{stopping criterion}) is met}
    \STATE \textbf{return} $\mathcal{S}$
    \ENDIF
    \STATE Find $a_t^*$ as in (\ref{ICASSP}).
    \STATE $\mathcal{S} \gets \mathcal{S} \cup \{a_t^*\}$, $t \gets t+1$
    \STATE Observe $a_t^*$, and update EVs $\mathcal{Z}$.
	\ENDWHILE 
    \STATE \textbf{return} $\mathcal{S}$
	\end{algorithmic} 
\end{algorithm}

In summary, the proposed GSS algorithm is presented in Algorithm \ref{alg:2}. Steps 1 to 8 are the initialization, and the remaining steps form the online sampling process. 
To run the GSS algorithm, make sure the following conditions are met.
\begin{enumerate}
    \item The GSS algorithm should be applied on an undirected and connected graph, with nonnegative edge weights.
    \item The target signal should be bandlimited, whose nonzero expansion coefficients under some given orthonormal matrix are concentrated in a passband.
    \item The target signal should also be unital, because the sign information of the output samples from the GSS algorithm cannot be used to infer the amplitude of the original signal and can only be used to estimate the direction.
    \item Generally, the sampling budget should be set to: $B < M < |\mathcal{V}| + |\mathcal{E}|$. 
\end{enumerate}
We shall further justify the GSS algorithm in Section~\ref{ssec:greedy}, and the superior performance of the GSS algorithm in sampling sign information over existing methods would be confirmed in Section~\ref{sec:experiments}. 

\subsection{Signal Recovery Given Sign Information}
\label{sec:recovery}
In this subsection, we propose the signal recovery algorithm UPOCS, given the acquired sign information of samples. Note that the feasible region is the intersection of a convex cone and the unit sphere. The idea is that we first get a solution that satisfies the observation constraints, and then we normalize it for the recovery signal.

For a convex cone $\hat{\mathcal{C}}$ described by (\ref{sign_space_closed}), we have the projection operators onto $\hat{\mathcal{C}}$ \cite{theodoridis2010adaptive}. Suppose the $i$-th sample in $\mathcal{S}$ is vertex $v_j$, then the projection onto $\hat{\mathcal{C}}$ can be defined as
\begin{equation}
    \boldsymbol{P}\boldsymbol{w}:=\left\{ \begin{array}{lr}
    \boldsymbol{w} - \frac{(\boldsymbol{U}_B)_j^\top(\boldsymbol{U}_B)_j}{\|(\boldsymbol{U}_B)_j\|^2} \boldsymbol{w},     \\\qquad\qquad\qquad\quad\  \text{if}\ v_j\in \mathcal{S},\ \text{sgn} \big((\boldsymbol{U}_B)_j\boldsymbol{w}\big)
      \not=y_\mathit{i}; \\
      \boldsymbol{w},  \qquad\qquad\qquad\qquad\qquad\qquad\qquad\quad   \text{otherwise}.
    \end{array}
    \right.
  \label{Pv}
\end{equation}
where $(\boldsymbol{U}_B)_j$ is the $j$-th row of $\boldsymbol{U}_B$.
Suppose the $i$-th sample in $\mathcal{S}$ is edge $e_j = (v_p,v_q)$ and $p < q$, then the projection onto $\hat{\mathcal{C}}$ can be defined as
\begin{equation}
    \boldsymbol{P}\boldsymbol{w}:=\left\{ \begin{array}{llr}
    \boldsymbol{w} - \frac{[(\boldsymbol{U}_B)_p-(\boldsymbol{U}_B)_q]^\top[(\boldsymbol{U}_B)_p-(\boldsymbol{U}_B)_q]}{\|(\boldsymbol{U}_B)_p-(\boldsymbol{U}_B)_q\|^2} \boldsymbol{w}, \\ \quad\quad\ \text{if}\ v_j\in \mathcal{S},
       \text{sgn} \big((\boldsymbol{U}_B)_p\boldsymbol{w}-(\boldsymbol{U}_B)_q)\boldsymbol{w}\big)\not=y_\mathit{i}; \\
      \boldsymbol{w},     \qquad\qquad\qquad\qquad\qquad\qquad\qquad\quad             \text{otherwise}.
    \end{array}
    \right.
  \label{Pe}
\end{equation}
Intuitively, for any signal $\boldsymbol{w}$ and vertex $v_j \in \mathcal{S}$, if the sign of $(\boldsymbol{U}_B\boldsymbol{w})_j$ is not consistent with the given sign information (outside the constraint space), then project $\boldsymbol{w}$ to the hyperplane $\{\boldsymbol{w}\in\mathbb{R}^\mathit{B}\mid\,(\boldsymbol{U}_B)_j\boldsymbol{w}= 0\}$, which is the boundary of the constraint space; otherwise, it is unchanged. Similarly, for any edge $e_j =(v_p,v_q) \in \mathcal{S}$, if the sign of $(\boldsymbol{U}_B\boldsymbol{w})_p -(\boldsymbol{U}_B\boldsymbol{w})_q$ is not consistent with the given sign information, then project $\boldsymbol{w}$ to $\{\boldsymbol{w}\in\mathbb{R}^\mathit{B}\mid\,(\boldsymbol{U}_B)_p\boldsymbol{w}-(\boldsymbol{U}_B)_q\boldsymbol{w}= 0\}$, otherwise, it keeps unchanged. 
As can be seen later in Appendix \ref{A-A}, the defined projection operator is firmly non-expansive.

Based on the definitions of $\hat{\mathcal{C}}$, along with the projection operator $\boldsymbol{P}$, and inspired by the classical projections onto convex sets algorithm (POCS) \cite{bauschke1996projection}, we propose that the direction of the original signal can be estimated by iteratively projecting onto $\hat{\mathcal{C}}$.  
Specifically, starting from an arbitrary signal, then a simple iterative projection process is applied. With enough iterations, the output is close enough to $\hat{\mathcal{C}}$. 

After the projections, we perform a normalization, then the recovery signal is obtained. Formally, the UPOCS algorithm can be described in Algorithm \ref{alg:1}.

\begin{algorithm} 
	\renewcommand{\algorithmicrequire}{\textbf{Input:}}
	\renewcommand{\algorithmicensure}{\textbf{Output:}}
	\caption{Unital Projection Onto Convex Sets (UPOCS).} 
	\label{alg:1} 
	\begin{algorithmic}[1]
		\REQUIRE Samples $\mathcal{S}$, an arbitrary random signal $\boldsymbol{h}_0$, maximum iterations $n_\text{max}$
		\ENSURE a recovery signal $\hat{\boldsymbol{h}}$ 
            \STATE Observe $\mathcal{S}$, and construct the projection operator $\boldsymbol{P}$.  
		\STATE $n \gets 0$ 
		\WHILE{$n < n_\text{max}$} 
		\STATE $\boldsymbol{h}_{\mathit{n}+1}\gets\boldsymbol{P}\boldsymbol{h}_\mathit{n}$ 
		\STATE $n \gets n+1$
		\ENDWHILE 
		\STATE \textbf{return} $\boldsymbol{h}_\mathit{n} / \|\boldsymbol{h}_n\|$ 
	\end{algorithmic} 
\end{algorithm}

\section{Analysis}
\label{theory}
\subsection{Analysis of the GSS Algorithm}
\label{ssec:greedy}

As mentioned in Section~\ref{sec:process}, signed sampling is a decision process for a Markovian model. Starting from an initial state, we decide on a sample with the aim of minimizing the size of the feasible region in the long run, then the sign of the sample is acquired and the next state is observed. This process is repeated until the sampling budget is reached.

Consider an example illustrated by \figref{fig:5}. Suppose the volume of the current state $\hat{\mathcal{J}}_t$ is $50$. If $\hat{\mathcal{J}}_t$ is divided into two parts by a sample $a_t$, whose volumes are $20$ and $30$ respectively, as the transition probability is proportional to the volume, there are $40\%$ and $60\%$ chances that the two parts would be the next state $\hat{\mathcal{J}}_{t+1}$ respectively. Such a feasible region segmentation and state transition process are repeated at time step $t+1, t+2$. 
\figref{fig:5} shows various possible state transitions of the sampling sequence $\{a_t,a_{t+1},a_{t+2}\}$, where the numbers in the boxes represent the volume of the state at each time step; while the numbers on the arrows represent the corresponding state transition probabilities. 

Note that the last branch of the tree in \figref{fig:5} has probabilities $0$ and $1$, which means the sign of $a_{t+2}$ is certain and the next state is no longer random. It can also be explained that the current feasible region would not be divided by the  hyperplane in \figref{fig:4}. Therefore, in this case, $a_{t+2}$ is not selected, because in the worst case, the feasible region does not have a volume reduction for the next state.

\begin{figure}[h]
    \centering
    \includegraphics[scale=0.25]{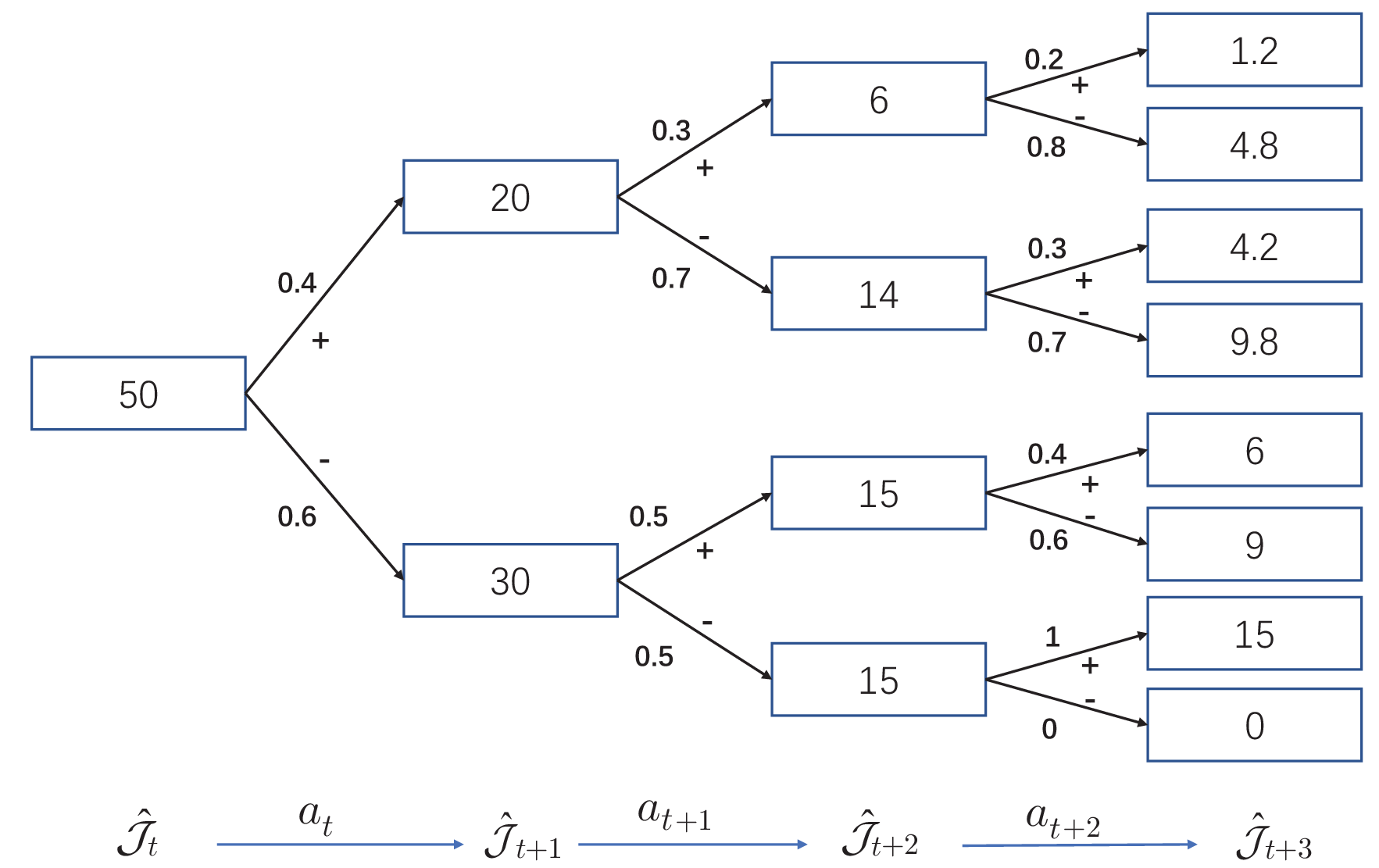}
    \caption{\label{fig:5} An example of feasible region segmentation for a sampling sequence.}
\end{figure}

We call the structure in \figref{fig:5} the feasible region segmentation tree. For each different sampling sequence, we can draw a tree like this, and each complete path (from the root node to some bottom leaf node) stands for a sign combination of the sampling sequence. In such a binary tree, each node has a number associated with the volume of a region. The sum of the numbers of the children nodes equals that of the parent node. The structure of the tree has important properties, which we now study.

Consider the possible sign information of all elements in $\mathcal{A}_0=\mathcal{V}\cup\mathcal{E}$, which is a random variable. Let $\mathcal{Q}$ be a \textit{complete realization} of $\mathcal{A}_0$. 
Each $\mathcal{Q}$ corresponds to a set of (vertex/edge, sign) pairs, that specifies the sign information of all elements in $\mathcal{A}_0$. 
A \textit{partial realization} is a subset of some $\mathcal{Q}$, which can be viewed as a set of (vertex/edge, sign) pairs, specifying the sign information of partial elements in $\mathcal{A}_0$ (the elements we have already picked from $\mathcal{A}_0$). 
For example, the observation sequence $O_t$
can be viewed as a partial realization. 
A partial realization $O_t$ is \textit{consistent} with a complete realization $\mathcal{Q}$ if they are equal everywhere in $a_{0:t-1}$. 
If $O_{t_1}=(a_{0:t_1-1},y_{0:t_1-1})$ and $O_{t_2}=(a_{0:t_2-1},y_{0:t_2-1})$ ($t_1\leq t_2$) are both consistent with a complete realization $\mathcal{Q}$, and each element in $a_{0:t_1-1}$ is also in $a_{0:t_2-1}$, 
then $O_{t_1}$ is a \textit{subrealization} of $O_{t_2}$. 

We illustrate with an example. Suppose for a given graph signal, we have $\mathcal{A}_0 = \{a_0,a_1,\dots,a_4\}$, where $a_i\ (i=0,1,\dots,4)$ is an unsampled vertex or edge. 
Assume the sign information of $a_0,a_1,\dots,a_4$ is $+,+,-,-,+$  respectively.
We select and observe $a_0,a_1,a_2$ sequentially, and the observations are $\{(a_0,+),(a_1,+),(a_2,-)\}$. 
In this example, any $\{(a_0,y_0),\dots,(a_4,y_4)\}(y_i \in \{+,-,0\}, i=0,1,\dots,4)$ is a complete realization, which describes the sign information of each element in $\mathcal{A}_0$. Among them, $\{(a_0,+),(a_1,+),(a_2,-),(a_3,-),(a_4,+)\}$ is the complete realization $\mathcal{Q}$ that is consistent with the true sign information of the signal.
$\{(a_0,+),(a_1,+),(a_2,-)\}$ is a partial realization, 
which is consistent with $\mathcal{Q}$. If we select $a_3$ at the next time step and observe its sign, then $\{(a_0,+),(a_1,+),(a_2,-),(a_3,-)\}$ is also a partial realization, which has a subrealization $\{(a_0,+),(a_1,+),(a_2,-)\}$. 
Although the observations are generated in an online manner, we consider two observation sequences to be the same partial realization as long as the samples in both observation sequences are the same and the sign information is consistent. For example, observation in the order of $a_0,a_1,a_2$ produces the same outcome as that in the order of $a_2,a_0,a_1$.

Let $\mathcal{O}$ be the collection of all the partial realizations and complete realizations. We introduce the following definitions.

\begin{definition}[\textbf{Monotonicity}]
A function $\pi:\mathcal{O}\rightarrow \mathbb{R}$
is monotone if and only if $\pi(O_{t_1}) \leq \pi(O_{t_2})$ for all $O_{t_1},O_{t_2}\in \mathcal{O}$, where $O_{t_1}$ is a subrealization of $O_{t_2}$. 
\end{definition}

For a function $\pi:\mathcal{O}\rightarrow \mathbb{R}_{\geq 0}$ 
and a given partial realization $O_t$,
the \textit{conditional expected marginal benefit} for an element $a_t$ conditioned on having observed $O_t$ is
\begin{equation*}
    \Delta(a_t\mid O_t) := \mathbb{E}\left[\pi(O_t\cup \{(a_t,y_t)\}) 
    -\pi(O_t)\mid O_t \right],
\end{equation*}
where $y_t$ stands for the sign of $a_t$, and the expectation is taken over $y_t$ \cite{golovin2011adaptive}. Using this, we further introduce the following.

 
\begin{definition}[\textbf{Adaptive Monotonicity} \cite{golovin2011adaptive}]
 A function $\pi:\mathcal{O} \rightarrow \mathbb{R}_{\geq 0}$ is adaptive monotone if the conditional expected marginal benefit of any element is nonnegative, i.e., for all $O_t=(a_{0:t-1},y_{0:t-1})$ and $a_t \in \mathcal{A}_t$, we have
$\Delta(a_t\mid O_t)\geq 0$.
\end{definition}

\begin{definition}[\textbf{Adaptive Submodularity} \cite{golovin2011adaptive}]
 A function $\pi:\mathcal{O} \rightarrow \mathbb{R}_{\geq 0}$ is adaptive submodular if for all $O_{t_1}=(a_{0:t_1-1},y_{0:t_1-1})$, $O_{t_2}=(a_{0:t_2-1},y_{0:t_2-1})$ ($O_{t_1}$ is a subrealization of $O_{t_2}$), and all $a \in \mathcal{A}_{t_2}$, we have $\Delta(a\mid
O_{t_1})\geq \Delta(a\mid O_{t_2})$.
\end{definition}


Back to our online sampling problem, let 
$\hat{\mathcal{J}}_0$ be the corresponding initial state. 
Based on the listed definitions, we have the following findings.
\begin{lemma}
 \label{branch}
    Let $\varphi(O_t)=-\text{Vol}(\hat{\mathcal{J}}(O_t))$, where 
    $\hat{\mathcal{J}}(O_t)$ is the state described by $O_t$. 
    Then $\varphi$ is monotone.
\end{lemma}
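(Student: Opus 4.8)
The plan is to unfold the definition of monotonicity and reduce the claim to a set-containment statement about feasible regions. By definition, $\varphi$ is monotone precisely when $\varphi(O_{t_1}) \le \varphi(O_{t_2})$ whenever $O_{t_1}$ is a subrealization of $O_{t_2}$. Since $\varphi(O_t) = -\text{Vol}(\hat{\mathcal{J}}(O_t))$, this inequality is equivalent to $\text{Vol}(\hat{\mathcal{J}}(O_{t_2})) \le \text{Vol}(\hat{\mathcal{J}}(O_{t_1}))$. So it suffices to prove the inclusion $\hat{\mathcal{J}}(O_{t_2}) \subseteq \hat{\mathcal{J}}(O_{t_1})$ and then invoke the monotonicity of Lebesgue volume under set inclusion.

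The containment follows from the construction of the feasible region in (\ref{eq:hmh}) together with the evolution rule (\ref{next_hat_C}). Writing $O_{t_1} = (a_{0:t_1-1}, y_{0:t_1-1})$ and $O_{t_2} = (a_{0:t_2-1}, y_{0:t_2-1})$, the definition of subrealization guarantees that every sample appearing in $a_{0:t_1-1}$ also appears in $a_{0:t_2-1}$ and carries the same sign. Because the feasible region is order-independent (as remarked after (\ref{eq:hmh})), $\hat{\mathcal{J}}(O_{t_2})$ equals $\mathcal{B}_B$ intersected with the constraint half-spaces and hyperplanes of (\ref{sign_space_closed}) indexed by all samples in $a_{0:t_2-1}$, whereas $\hat{\mathcal{J}}(O_{t_1})$ is the same intersection taken only over the subset $a_{0:t_1-1}$. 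Intersecting with the additional constraints indexed by $a_{0:t_2-1} \setminus a_{0:t_1-1}$ can only shrink the set, so $\hat{\mathcal{J}}(O_{t_2}) \subseteq \hat{\mathcal{J}}(O_{t_1})$.

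Finally, since Lebesgue volume is monotone with respect to inclusion, the containment $\hat{\mathcal{J}}(O_{t_2}) \subseteq \hat{\mathcal{J}}(O_{t_1})$ yields $\text{Vol}(\hat{\mathcal{J}}(O_{t_2})) \le \text{Vol}(\hat{\mathcal{J}}(O_{t_1}))$, i.e. $\varphi(O_{t_1}) \le \varphi(O_{t_2})$, which is exactly the monotonicity asserted.

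There is no serious technical obstacle here: the statement amounts to the intuition that \emph{adding consistency constraints can only shrink the feasible region}. The only points requiring care are (i) using consistency to ensure that the samples shared between $O_{t_1}$ and $O_{t_2}$ contribute identical constraints in both realizations, so that the constraint index set of $O_{t_1}$ is literally a subset of that of $O_{t_2}$, and (ii) the order-independence of $\hat{\mathcal{J}}(\cdot)$, which permits comparing the two regions as intersections over nested index sets rather than as sequentially generated objects.
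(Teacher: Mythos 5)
Your proof is correct and follows essentially the same route as the paper's: both arguments express $\hat{\mathcal{J}}(O_{t_1})$ and $\hat{\mathcal{J}}(O_{t_2})$ as intersections of the constraint sets from (\ref{sign_space_closed}) with the unit ball, observe that the subrealization condition makes the constraints of $O_{t_1}$ a subset of those of $O_{t_2}$, and conclude $\hat{\mathcal{J}}(O_{t_2}) \subseteq \hat{\mathcal{J}}(O_{t_1})$ so that volume monotonicity under inclusion gives the claim. Your added remarks on consistency and order-independence merely make explicit what the paper treats as obvious.
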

\begin{proof}
    See Appendix \ref{A-B} for details.
\end{proof}

It implies that each complete path of the feasible region segmentation tree (cf.\ \figref{fig:5}) satisfies monotonicity.
\begin{lemma}
 \label{submodular}
  Let $\phi(O_t)=\text{Vol}(\hat{\mathcal{J}}_0)-\text{Vol}(\hat{\mathcal{J}}(O_t))$,
  where $\hat{\mathcal{J}}(O_t)$ is the state described by $O_t$. 
  Then $\phi$ is adaptive monotonic and adaptive submodular.
\end{lemma}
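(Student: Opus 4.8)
The plan is to compute $\Delta(a_t\mid O_t)$ in closed form and then read off both properties directly from it. First I would fix a partial realization $O_t$ and a candidate $a_t\in\mathcal{A}_t$, and set $V=\text{Vol}(\hat{\mathcal{J}}(O_t))$. The hyperplane $\mathcal{H}_{a_t}=\{\boldsymbol{w}\mid\boldsymbol{\psi}_{a_t}^\top\boldsymbol{U}_B\boldsymbol{w}=0\}$ passes through the origin and cuts $\hat{\mathcal{J}}(O_t)$ into the two pieces corresponding to $y_t=+$ and $y_t=-$; denote their volumes by $V_+$ and $V_-$. Since the degenerate event $y_t=0$ lies on $\mathcal{H}_{a_t}$ and has zero Lebesgue measure (as already noted in Section~\ref{problem_MDP}), the two pieces tile $\hat{\mathcal{J}}(O_t)$ up to a null set, so $V_++V_-=V$. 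Substituting $\phi(O)=\text{Vol}(\hat{\mathcal{J}}_0)-\text{Vol}(\hat{\mathcal{J}}(O))$ and the transition probabilities $\mathbb{P}(y_t\mid O_t)=V_{y_t}/V$ from (\ref{prob_yt}) into the definition of the conditional expected marginal benefit yields
\begin{equation*}
    \Delta(a_t\mid O_t)=\frac{V_+}{V}(V-V_+)+\frac{V_-}{V}(V-V_-)=\frac{2V_+V_-}{V_++V_-},
\end{equation*}
where the last equality uses $V-V_+=V_-$ and $V-V_-=V_+$. Thus $\Delta(a_t\mid O_t)$ is exactly the harmonic mean of the two piece-volumes $V_+$ and $V_-$.

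From this identity adaptive monotonicity is immediate: since $V_+,V_-\geq 0$ we have $\Delta(a_t\mid O_t)\geq 0$ for every $O_t$ and every $a_t\in\mathcal{A}_t$, which is the defining inequality.

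For adaptive submodularity, let $O_{t_1}$ be a subrealization of $O_{t_2}$ and let $a\in\mathcal{A}_{t_2}$. The key step is the set inclusion $\hat{\mathcal{J}}(O_{t_2})\subseteq\hat{\mathcal{J}}(O_{t_1})$: the constraint set defining $O_{t_1}$ is a subset of that defining $O_{t_2}$ (they agree on the common samples because both are consistent with a common complete realization), and since the feasible region depends only on the set of observed (sample, sign) pairs and not on their order, the extra constraints of $O_{t_2}$ can only shrink the region. The hyperplane $\mathcal{H}_a$ is the same for both regions because it is determined by $a$ alone, so cutting $\hat{\mathcal{J}}(O_{t_1})$ and $\hat{\mathcal{J}}(O_{t_2})$ by $\mathcal{H}_a$ and invoking the inclusion gives $V_+^{(2)}\leq V_+^{(1)}$ and $V_-^{(2)}\leq V_-^{(1)}$ for the respective piece-volumes. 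Finally, the map $(x,y)\mapsto 2xy/(x+y)$ is nondecreasing in each coordinate on $\mathbb{R}_{>0}$ (its partial derivative in $x$ equals $2y^2/(x+y)^2\geq 0$), so monotonicity of the harmonic mean delivers $\Delta(a\mid O_{t_1})\geq\Delta(a\mid O_{t_2})$, which is precisely adaptive submodularity.

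The closed-form computation is routine; the step needing the most care is the inclusion $\hat{\mathcal{J}}(O_{t_2})\subseteq\hat{\mathcal{J}}(O_{t_1})$ together with the accompanying claim that each sub-piece shrinks, and the handling of degenerate cases where $V_+$ or $V_-$ vanishes (there $\Delta=0$, and the monotonicity of $2xy/(x+y)$ extends continuously to the boundary of the nonnegative quadrant). I would also verify that $a\in\mathcal{A}_{t_2}$ implies $a\in\mathcal{A}_{t_1}$, so that $a$ is genuinely unsampled in both realizations and the marginal benefit is well defined in each case.
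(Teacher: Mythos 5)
Your proposal is correct and matches the paper's own proof (Appendix~\ref{A-C}) essentially step for step: both reduce $\Delta(a\mid O)$ to the harmonic mean $2V_+V_-/(V_++V_-)$ using $V_++V_-=V$ and the transition probabilities (\ref{prob_yt}), and both obtain adaptive submodularity from the per-piece volume inequalities under subrealizations --- the paper invokes Lemma~\ref{branch} where you re-derive the inclusion $\hat{\mathcal{J}}(O_{t_2})\subseteq\hat{\mathcal{J}}(O_{t_1})$ directly. The only cosmetic differences are that you read adaptive monotonicity off the closed form while the paper deduces it straight from volume monotonicity without it, and your explicit handling of the degenerate cases $V_+=0$ or $V_-=0$ (where the paper's expression $2/(1/V_++1/V_-)$ is formally undefined) is a small point of care the paper omits.
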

\begin{proof}
    See Appendix \ref{A-C} for details.
\end{proof}

 As mentioned in Section~\ref{sec:Algorithm}, our policy is to greedily select the vertex or edge so that the current feasible region can be divided into two parts with the smallest difference in size. The policy can be expressed as
\begin{align}
    \label{equal volume}
        a_t^* = \mathop{\text{argmin}}\limits_{a_t\in \mathcal{A}_t}\ \Big\vert &\text{Vol}\left(\hat{\mathcal{J}}(O_t^*\cup \{(a_t,+)\})\right)\nonumber\\
        & - \text{Vol}\left(\hat{\mathcal{J}}(O_t^*\cup \{(a_t,-)\})\right)\Big\vert, 
\end{align}
where $a_t^*$ is the sample to be determined by the above greedy method at time step $t$. 
$\mathcal{A}_t$ is the set of all unsampled vertices and edges at time step $t$, i.e., $\mathcal{A}_t=\mathcal{A}_0\backslash a_{0:t-1}^*$. 
Moreover, $O_t^*$ is the observation sequence associated with $a_{0:t-1}^*$, and $\hat{\mathcal{J}}(O_t^* \cup \{(a_t,y_t)\})$ 
with $y_t=+,-$ is the state at time step $t+1$.

\begin{theorem}
    \label{main conclusion}
    Suppose the sampling sequence solved from (\ref{equal volume}) is $a_{0:T-1}^*$, and the globally optimal sampling set of (\ref{optimization}) with the same size is $a_{0:T-1}^\text{opt}$. Then we have
    \begin{equation*}
        \Phi(a_{0:T-1}^*)  \geq \left(1-\frac{1}{e}\right)\Phi(a_{0:T-1}^\text{opt}).
    \end{equation*}
\end{theorem}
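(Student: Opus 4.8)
The plan is to cast this as an instance of the adaptive submodularity framework of Golovin and Krause, so that the generic $(1-1/e)$ guarantee for greedy policies applies. The two facts that must be supplied are that the objective is the expected value of an adaptive monotone, adaptive submodular function (already delivered by Lemma~\ref{submodular}), and that the equal-volume-splitting rule (\ref{equal volume}) is exactly the adaptive greedy policy that maximizes the conditional expected marginal benefit $\Delta(\cdot\mid\cdot)$ at each step. The bulk of the work is establishing the latter.

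First I would compute $\Delta(a_t\mid O_t^*)$ in closed form. Write $V=\text{Vol}(\hat{\mathcal{J}}(O_t^*))$, $V_+=\text{Vol}(\hat{\mathcal{J}}(O_t^*\cup\{(a_t,+)\}))$ and $V_-=\text{Vol}(\hat{\mathcal{J}}(O_t^*\cup\{(a_t,-)\}))$. Ignoring the measure-zero event $y_t=0$ we have $V_++V_-=V$, and the transition probabilities from (\ref{prob_yt}) are $V_+/V$ and $V_-/V$. Substituting into the definition of $\Delta$ and using $\phi(O)=\text{Vol}(\hat{\mathcal{J}}_0)-\text{Vol}(\hat{\mathcal{J}}(O))$ gives
\begin{equation*}
\Delta(a_t\mid O_t^*)=\frac{V_+}{V}(V-V_+)+\frac{V_-}{V}(V-V_-)=\frac{V^2-V_+^2-V_-^2}{V}=\frac{2V_+V_-}{V}.
\end{equation*}

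Next, since $V$ is independent of the candidate $a_t$, I would use the identity $4V_+V_-=(V_++V_-)^2-(V_+-V_-)^2=V^2-(V_+-V_-)^2$ to conclude that maximizing $\Delta(a_t\mid O_t^*)$ is equivalent to minimizing $\lvert V_+-V_-\rvert$. Hence the rule (\ref{equal volume}) coincides with the adaptive greedy policy $a_t^*=\mathop{\text{argmax}}_{a_t\in\mathcal{A}_t}\Delta(a_t\mid O_t^*)$. I would then connect $\Phi$ to $\phi$: taking the expectation of $\phi(O_T)=\text{Vol}(\hat{\mathcal{J}}_0)-\text{Vol}(\hat{\mathcal{J}}(O_T))$ over the sign sequence yields $\Phi(a_{0:T-1})=\mathbb{E}[\phi(O_T)\mid\hat{\mathcal{J}}_0]$, so $\Phi$ is precisely the expected reward of a length-$T$ policy with respect to $\phi$. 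By Lemma~\ref{submodular}, $\phi$ is adaptive monotone and adaptive submodular, so the Golovin--Krause theorem~\cite{golovin2011adaptive} guarantees that the greedy policy attains at least a $(1-1/e)$ fraction of the expected reward of any length-$T$ policy, in particular of $a_{0:T-1}^{\text{opt}}$. This is exactly the claimed bound.

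The main obstacle is the first step: the guarantee from adaptive submodularity applies only to the marginal-benefit-maximizing greedy policy, whereas the algorithm is stated in terms of the geometric equal-splitting criterion. Establishing that these two criteria coincide---via the closed form $\Delta(a_t\mid O_t^*)=2V_+V_-/V$ and the identity relating $V_+V_-$ to $(V_+-V_-)^2$---is the linchpin; once it is in place, the remainder is bookkeeping together with the citation of the generic theorem.
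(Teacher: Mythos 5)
Your proposal is correct and follows essentially the same route as the paper's proof in Appendix D: identify $\Phi(a_{0:T-1})=\mathbb{E}[\phi(O_T)\mid\hat{\mathcal{J}}_0]$, invoke Lemma~\ref{submodular} together with the Golovin--Krause $(1-1/e)$ guarantee for the adaptive greedy policy, and then show that the equal-volume rule (\ref{equal volume}) coincides with greedily maximizing $\Delta(\cdot\mid O_t^*)$. The only (cosmetic) difference is in that last equivalence: you use the closed form $\Delta=2V_+V_-/V$ with the identity $4V_+V_-=V^2-(V_+-V_-)^2$, while the paper minimizes $(V_+^2+V_-^2)/V$ and argues verbally that, for a fixed sum, a smaller difference gives a smaller sum of squares --- algebraically the same fact, and your product form in fact matches the harmonic-mean expression the paper itself derives in Appendix~\ref{A-C}.
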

\begin{proof}
    See Appendix \ref{A-D} for details.
\end{proof}

Therefore, the sampling sequence solved in an online manner by (\ref{equal volume}) can achieve the performance within $(1-\frac{1}{e})$ of the globally optimal solution of (\ref{optimization}). This justifies the online sampling process (Steps 11 to 21) of the GSS algorithm, because
(\ref{ICASSP}) approximates the implementation of (\ref{equal volume}) using the difference in the spatial distance instead of the difference in volume.

\subsection{Stopping Criterion of the GSS Algorithm}
\label{stopping criterion}

This subsection discusses an early stopping criterion for the GSS algorithm. Note that the online sampling process is motivated by reducing the size of the feasible region. If we search all the unsampled vertices and edges and find that the feasible region would not change anymore, then we can terminate the sampling process because there will be no more reward. From another perspective, the sign information on all the unsampled vertices and edges can be inferred with certainty, and further sampling only adds redundant samples. 

\begin{figure}[h]
    \centering
    \includegraphics[scale=0.26]{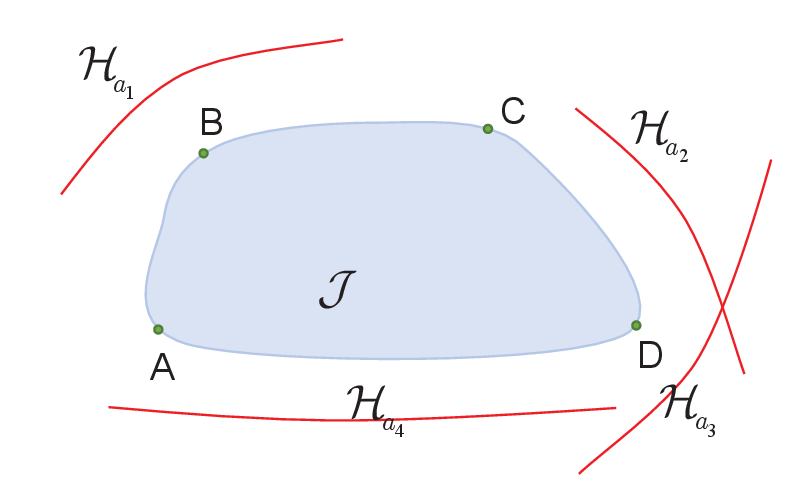}
    \caption{\label{fig:stop} An example of feasible region segmentation during the sampling process that can be terminated pre-maturely.}
\end{figure}

To see an example, in \figref{fig:stop}, the gray region depicts the feasible region $\mathcal{J}$ (top view), and the red lines $\mathcal{H}_{a_1},\ldots,\mathcal{H}_{a_4}$ are the hyperplanes associated with the remaining unsampled vertices and edges $a_1,\ldots,a_4$. As we can see, $\mathcal{H}_{a_1},\ldots, \mathcal{H}_{a_4}$ make no contribution to the reduction of the size of the feasible region because $\mathcal{J}$ (and $\hat{\mathcal{J}}$) would not be separated. 
In this case, the feasible region segmentation tree (cf.\ \figref{fig:5}) degenerates into a single chain, and the feasible region remains unchanged for any subsequent samples. In general, we have the following result.

\begin{theorem}
 \label{the:stopping}
  Let $O_t=(a_{0:t-1},y_{0:t-1})$ be the historical observations, and $\mathcal{J}_t$, $\hat{\mathcal{J}}_t$ be the respective feasible region and state at time step $t$. If at some time step ${t_1}$, for any $a_{t_1} \in \mathcal{A}_{t_1}$, $\text{Vol}(\hat{\mathcal{J}}_{t_1})=\text{Vol}(\hat{\mathcal{J}}_{{t_1}+1})$ holds, 
  then $\text{Vol}(\hat{\mathcal{J}}_{t_2}) = \text{Vol}(\hat{\mathcal{J}}_{t_1})$ for any $O_{t_2}$, where $O_{t_1}$ is a subrealization of $O_{t_2}$. 
\end{theorem}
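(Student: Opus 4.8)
The plan is to reduce the claim to an elementary measure-theoretic fact, exploiting the order-independence of the feasible region emphasized in Section~\ref{problem_1}: the state $\hat{\mathcal{J}}(O_{t_2})$ depends only on the \emph{set} of sampled elements together with their signs, not on the order of observation. Since $O_{t_1}$ is a subrealization of $O_{t_2}$, the extra samples recorded in $O_{t_2}$ form a finite set $\mathcal{S}'\subseteq\mathcal{A}_{t_1}$, each with an observed sign $y_a$. Writing
\begin{equation*}
  H_a := \left\{\boldsymbol{w}\in\mathbb{R}^B \,\middle|\, \boldsymbol{\psi}_a^\top \boldsymbol{U}_B \boldsymbol{w}\lesseqqgtr y_a\right\},
\end{equation*}
equation~(\ref{next_hat_C}) together with order-independence gives the single-intersection formula
\begin{equation*}
  \hat{\mathcal{J}}_{t_2} = \hat{\mathcal{J}}_{t_1}\cap\bigcap_{a\in\mathcal{S}'} H_a .
\end{equation*}

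Next I would reinterpret the hypothesis. For each $a\in\mathcal{A}_{t_1}$, the true signal direction $\boldsymbol{h}^*$ lies in $\mathcal{J}_{t_1}\subseteq\hat{\mathcal{J}}_{t_1}$, so the sign observed for $a$ in any consistent realization is precisely the one whose half-space $H_a$ contains the region; appending $a$ therefore replaces $\hat{\mathcal{J}}_{t_1}$ by $\hat{\mathcal{J}}_{t_1}\cap H_a$. The assumption $\text{Vol}(\hat{\mathcal{J}}_{t_1})=\text{Vol}(\hat{\mathcal{J}}_{t_1+1})$ thus reads $\text{Vol}(\hat{\mathcal{J}}_{t_1}\cap H_a)=\text{Vol}(\hat{\mathcal{J}}_{t_1})$, and since $\hat{\mathcal{J}}_{t_1}\cap H_a\subseteq\hat{\mathcal{J}}_{t_1}$ this forces
\begin{equation*}
  \text{Vol}\left(\hat{\mathcal{J}}_{t_1}\setminus H_a\right)=0 \qquad\text{for every } a\in\mathcal{A}_{t_1}.
\end{equation*}
Geometrically, $\hat{\mathcal{J}}_{t_1}$ sits, up to a null set, entirely on the admissible side of every hyperplane $\mathcal{H}_a$.

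The conclusion then follows from finite subadditivity of Lebesgue measure. From the intersection formula,
\begin{equation*}
  \hat{\mathcal{J}}_{t_1}\setminus\hat{\mathcal{J}}_{t_2} = \bigcup_{a\in\mathcal{S}'}\left(\hat{\mathcal{J}}_{t_1}\setminus H_a\right),
\end{equation*}
a finite union of null sets, hence itself null; combined with $\hat{\mathcal{J}}_{t_2}\subseteq\hat{\mathcal{J}}_{t_1}$ this yields $\text{Vol}(\hat{\mathcal{J}}_{t_2})=\text{Vol}(\hat{\mathcal{J}}_{t_1})$, as required. Equivalently, one can induct on $|\mathcal{S}'|$: each new sample is redundant on $\hat{\mathcal{J}}_{t_1}$, hence on the current region, which remains equal to $\hat{\mathcal{J}}_{t_1}$ up to a null set throughout.

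I expect the main obstacle to lie in the reinterpretation step rather than in the measure argument: one must argue carefully that the sign $y_a$ appearing in any consistent future observation $O_{t_2}$ is exactly the one whose half-space contains $\hat{\mathcal{J}}_{t_1}$, and that the degenerate boundary case (an observed sign $0$, where the region meets $\mathcal{H}_a$ only in a lower-dimensional face) may be discarded precisely as in the MDP transition model of Section~\ref{problem_MDP}. Order-independence is the lever that lets me treat $\hat{\mathcal{J}}_{t_2}$ as a single intersection against $\hat{\mathcal{J}}_{t_1}$ instead of tracking each intermediate state on the path from $t_1$ to $t_2$.
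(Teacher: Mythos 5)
Your proof is correct, and it shares the paper's skeleton: order-independence of the feasible region reduces $\hat{\mathcal{J}}_{t_2}$ to the single intersection $\hat{\mathcal{J}}_{t_1}\cap\bigcap_{a\in\mathcal{S}'}H_a$, and the hypothesis says each remaining constraint is redundant on $\hat{\mathcal{J}}_{t_1}$. Where you genuinely diverge is in how redundancy is extracted from the volume hypothesis. The paper passes directly from $\text{Vol}(\hat{\mathcal{J}}_{t_1})=\text{Vol}(\hat{\mathcal{J}}_{t_1+1})$ to the \emph{set} identity (\ref{pre_no_intersection}), $\mathcal{J}_{t_1}\cap H_{a_{j_i}}=\mathcal{J}_{t_1}$ for every remaining candidate, and then concludes the feasible region is literally unchanged under any further samples; the step from volume equality to set equality is left implicit (it does hold, but only via convexity: if the full-dimensional convex body $\hat{\mathcal{J}}_{t_1}$ met the open complementary half-space at all, it would contain an interior point there and hence a set of positive volume). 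You instead stay at the measure level: the hypothesis gives only $\text{Vol}\bigl(\hat{\mathcal{J}}_{t_1}\setminus H_a\bigr)=0$, and De Morgan plus finite subadditivity yields $\text{Vol}\bigl(\hat{\mathcal{J}}_{t_1}\setminus\hat{\mathcal{J}}_{t_2}\bigr)=0$. Your route proves exactly the stated (volume) conclusion without the unstated convexity step and handles degenerate cases gracefully (observed sign $0$ or a lower-dimensional region, where all volumes vanish and the claim is trivial); the paper's route, at the cost of that implicit step, buys the stronger set-level invariance $\mathcal{J}_{t_2}=\mathcal{J}_{t_1}$, which is what the surrounding discussion actually uses when asserting that early stopping gives the same recovery performance as observing all vertices and edges. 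The worry you flag about which sign appears in a consistent $O_{t_2}$ is resolved identically in both arguments --- the signs in $O_{t_2}$ come from the common complete realization, so the hypothesis applies to each appended constraint --- and in the full-dimensional case the convexity observation above even empties the ``null sliver,'' so no realizable signal can sit on the discarded side.
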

In other words, if any unsampled vertex or edge does not decrease the size of the feasible region, then the size of the feasible region will remain the same no matter how many additional samples are added.

\begin{proof}
    Suppose at time step ${t_1}$, the remaining candidate vertices and edges are $a_{j_1},\dots,a_{j_m}$  
    with the corresponding sign information $y_{a_{j_1}},\dots y_{a_{j_m}}$.  
    If any unsampled vertex or edge does not cause the size of the feasible region to decrease, referring to (\ref{next_hat_C}), we have 
    \begin{align}
    \label{pre_no_intersection}
            \mathcal{J}_{t_1} \bigcap \left\{\boldsymbol{w}\in\mathbb{R}^\mathit{B}\mid\boldsymbol{\psi}_{a_{j_i}}^\top\boldsymbol{U}_B\boldsymbol{w}\lesseqqgtr y_{a_{j_i}}\right\} = \mathcal{J}_{t_1}, \nonumber\\
            \text{for all}\ i\in \{1,2,\dots,m\}.
    \end{align}
    
     Through this condition, $\mathcal{J}_{t_1}$ is in the intersection of the constraint spaces of the remaining unsampled vertices and edges. So, the feasible region remains unchanged no matter how many the samples in $\{a_{j_1},\dots,a_{j_m}\}$ are added.
    \end{proof}

From the result, we can stop sampling if (\ref{pre_no_intersection}) holds. Without loss of generality, suppose that some candidate sample $a_{j_i}$ has sign $y_{a_{j_i}} = -1$, then (\ref{pre_no_intersection}) can be rewritten as
\begin{equation*}
    \mathcal{J}_{t_1} \bigcap \left\{\boldsymbol{w}\in\mathbb{R}^\mathit{B}\mid\boldsymbol{\psi}_{a_{j_i}}^\top\boldsymbol{U}_B\boldsymbol{w}\leq 0\right\} = \mathcal{J}_{t_1},
\end{equation*}
i.e., 
\begin{equation*}
    \mathcal{J}_{t_1} \bigcap \left\{\boldsymbol{w}\in\mathbb{R}^\mathit{B}\mid\boldsymbol{\psi}_{a_{j_i}}^\top\boldsymbol{U}_B\boldsymbol{w}> 0\right\} = \emptyset.
\end{equation*}
In other words, the current feasible region $\mathcal{J}_{t_1}$ is located entirely on one side of the hyperplane $\left\{\boldsymbol{w}\in\mathbb{R}^\mathit{B}\mid\boldsymbol{\psi}_{a_{j_i}}^\top\boldsymbol{U}_B\boldsymbol{w} = 0\right\}$.

Therefore, it is time to terminate the sampling once
the current feasible region is on one side of each hyperplane of the remaining unsampled vertices and edges. 
As each vector in the feasible region is a linear combination of EVs
with non-negative coefficients, if all the EVs are on one side of each hyperplane associated with the remaining unsampled vertices and edges, so does the entire feasible region. 
The stopping criterion is thus given as follows.
\newline
   \newline \textbf{Stopping criterion.}     \textit{In the greedy selection stage of the GSS algorithm, 
    if at some time step $t$,  $d(\boldsymbol{z},\mathcal{H}_{a_t}) \geq 0$ for each $\boldsymbol{z} \in \mathcal{Z}$ and $a_t \in \mathcal{A}_t$, or $d(\boldsymbol{z},\mathcal{H}_{a_t})\leq 0$, for each $\boldsymbol{z} \in \mathcal{Z}$ and $a_t \in \mathcal{A}_t$,
    then the sampling process is terminated.}
    \newline

Using the stopping criterion, the GSS algorithm can be terminated pre-maturely (even the sample size is smaller than the budget $M$); and by then, all the remaining unsampled vertices and edges do not provide additional knowledge. Importantly, 
once the stopping criterion is reached, the sampling sequence solved by the GSS algorithm can theoretically result in the same recovery performance as the case that all the vertices and edges are observed. 

\subsection{Performance Analysis of the UPOCS Algorithm}
In this subsection, we discuss a theoretical result associated with the UPOCS algorithm. 
The following theorem about the convergence properties of the UPOCS algorithm is derived, 
through the study of the feasible region and the projection operator.
\begin{theorem}
    \label{theorem1}
  The iterative recovery sequence $\left\{\boldsymbol{h}_\mathit{n}\right\}$ converges linearly to some point $\hat{\boldsymbol{h}}$ in $\hat{\mathcal{C}}$, and the convergence
  rate is independent of the choice of the initial point $\boldsymbol{h}_0$.
\end{theorem}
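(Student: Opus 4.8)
The plan is to recognize Algorithm~\ref{alg:1} as a classical POCS iteration and then to upgrade the standard (merely convergent) analysis to a \emph{linear} rate by exploiting the polyhedral structure of $\hat{\mathcal{C}}$. First I would make explicit that the operator $\boldsymbol{P}$ assembled in Step~1 of Algorithm~\ref{alg:1} is the composition $\boldsymbol{P}=\boldsymbol{P}_M\cdots\boldsymbol{P}_1$ of the per-sample projections in (\ref{Pv}) and (\ref{Pe}), where each $\boldsymbol{P}_i$ is the metric projection onto the closed half-space $C_i=\{\boldsymbol{w}\in\mathbb{R}^B\mid \boldsymbol{\psi}_i^\top\boldsymbol{U}_B\boldsymbol{w}\lesseqqgtr y_i\}$ whose bounding hyperplane passes through the origin. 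By the computation in Appendix~\ref{A-A} each $\boldsymbol{P}_i$ is firmly non-expansive with $\mathrm{Fix}(\boldsymbol{P}_i)=C_i$; since $\hat{\mathcal{C}}=\bigcap_{i=1}^M C_i\ni\boldsymbol{0}$ is nonempty, the standard fact for compositions of projections onto convex sets with common points gives $\mathrm{Fix}(\boldsymbol{P})=\bigcap_i C_i=\hat{\mathcal{C}}$.

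For the qualitative part (convergence to some $\hat{\boldsymbol{h}}\in\hat{\mathcal{C}}$) I would use that a composition of firmly non-expansive operators is averaged, so the iterates $\boldsymbol{h}_{n+1}=\boldsymbol{P}\boldsymbol{h}_n$ are Fej\'er monotone with respect to $\hat{\mathcal{C}}$ and converge, by the Krasnoselskii--Mann theorem for averaged maps with nonempty fixed-point set, to a point $\hat{\boldsymbol{h}}\in\hat{\mathcal{C}}$ \cite{bauschke1996projection}. This settles the first assertion of the theorem.

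The heart of the theorem is the \emph{linear} rate together with its independence of $\boldsymbol{h}_0$. Here I would invoke that $\{C_i\}_{i=1}^M$ is a finite family of half-spaces and hyperplanes, so $\hat{\mathcal{C}}$ is a polyhedral cone and the family is \emph{linearly regular}: by Hoffman's error bound there is a constant $\kappa>0$, depending only on the normal directions $\{\boldsymbol{\psi}_i^\top\boldsymbol{U}_B\}$ (equivalently, only on the fixed data $\boldsymbol{\Psi}$ and $\boldsymbol{U}_B$), such that
\[
  d(\boldsymbol{w},\hat{\mathcal{C}})\le \kappa\,\max_{1\le i\le M} d(\boldsymbol{w},C_i)\qquad\text{for all }\boldsymbol{w}\in\mathbb{R}^B.
\]
Combining linear regularity with the one-step energy inequality for the composed projection (each sweep decreases the squared distance to $\hat{\mathcal{C}}$ by a quantity bounded below, via $\kappa$, by a multiple of $d(\boldsymbol{h}_n,\hat{\mathcal{C}})^2$) yields a contraction factor $c=c(\kappa)\in(0,1)$ with $d(\boldsymbol{h}_{n+1},\hat{\mathcal{C}})\le c\,d(\boldsymbol{h}_n,\hat{\mathcal{C}})$; telescoping together with the Fej\'er property then upgrades this to $\|\boldsymbol{h}_n-\hat{\boldsymbol{h}}\|\le Cc^{\,n}$. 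Since $\kappa$, and hence $c$, is determined solely by the arrangement of hyperplanes and not by where the iteration starts, the rate is independent of $\boldsymbol{h}_0$.

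The main obstacle is precisely this last step: ordinary non-expansiveness only buys convergence, not a geometric rate, so the argument must extract a uniform contraction. I expect the technical core to be verifying linear regularity for the composed iteration with a constant that does not depend on the iterate, most cleanly by appealing to the linear-convergence theorems for projection methods under bounded linear regularity (Gubin--Polyak--Raik; Bauschke--Borwein) specialized to polyhedra, where Hoffman's lemma supplies the uniform $\kappa$. A secondary point to dispatch is that the active constraint set can change between iterations and that $\hat{\mathcal{C}}$ is unbounded; both are handled by the homogeneity of the cone and by the fact that the iterates remain in a bounded region (Fej\'er monotonicity), so the global, iterate-independent $\kappa$ applies throughout.
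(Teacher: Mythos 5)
Your proposal is correct, and it reaches the theorem through the same general machinery the paper uses --- the projection-algorithm framework of Bauschke and Borwein \cite{bauschke1996projection} --- but by a genuinely different route through that framework. The paper keeps $\boldsymbol{P}$ as a single unified operator, certifies it firmly non-expansive by writing it as $\boldsymbol{I}-\boldsymbol{u}\boldsymbol{u}^\top/\|\boldsymbol{u}\|^2$ on sign-inconsistent inputs, obtains regularity from the fact that $\hat{\mathcal{C}}$ is a closed convex cone (Propositions 5.4 and 5.9 of \cite{bauschke1996projection}), asserts that the algorithm is cyclic, singular, unrelaxed, and \emph{linearly focusing} (Definition 4.8 there), and then delegates everything quantitative to Corollary 3.12 (linear convergence) and Theorem 5.7 (rate independent of $\boldsymbol{h}_0$). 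You instead decompose $\boldsymbol{P}$ into the per-sample half-space projections of (\ref{Pv}) and (\ref{Pe}), identify $\mathrm{Fix}(\boldsymbol{P})=\bigcap_i C_i=\hat{\mathcal{C}}$, and supply the regularity ingredient yourself via Hoffman's error bound for the polyhedral family $\{C_i\}$, which yields a \emph{global} constant $\kappa$ determined solely by the rows $\boldsymbol{\psi}_i^\top\boldsymbol{U}_B$; the per-sweep energy decrease then produces the contraction factor explicitly. Your version buys transparency: it shows exactly where the $\boldsymbol{h}_0$-independent rate comes from (the hyperplane arrangement, through Hoffman), and it replaces the paper's ``linearly focusing'' step --- which the paper asserts rather than verifies --- with a checkable polyhedral argument; moreover, because your $\kappa$ is global rather than merely bounded-regular, the Fej\'er-boundedness caveat you raise at the end is not even needed, though invoking it is harmless and matches how the paper's cone-based \emph{bounded} linear regularity would have to be localized. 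What the paper's route buys is brevity, at the cost of leaving the key quantitative verification implicit. The one step you correctly flag as the technical core --- that each cycle of composed projections decreases $d(\cdot,\hat{\mathcal{C}})^2$ by a multiple, via $\kappa$, of $d(\cdot,\hat{\mathcal{C}})^2$ --- is indeed the standard Gubin--Polyak--Raik triangle-inequality chain specialized to finitely many half-spaces and hyperplanes through the origin, so no gap remains once that classical lemma is cited.
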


\begin{proof}
  See Appendix \ref{A-A} for details.  
\end{proof}

It is worth noting that the recovery result depends on $\boldsymbol{h}_0$, though the convergence rate is independent of $\boldsymbol{h}_0$.

\section{Experiments}
\label{sec:experiments}

In this section, we use numerical experiments to demonstrate the effectiveness of our approach. 

\subsection{Recovery of Synthesis Graph Signals}
\label{sensor graph}

In this experiment, we generate three graph topologies, with the following parameters
\begin{enumerate}
    \item Sensor graph: $N = 40, |\mathcal{E}| = 138$.
    \item Erdős–Rényi (ER) graph: $N = 40$, and the connection probability $p = 0.3$, which results $|\mathcal{E}| = 98$.
    \item Watts-Strogatz (WS) graph: $N = 40$, average degree $k = 4$, and the reconnection probability $p = 0.25$, which results $|\mathcal{E}| = 160$.
\end{enumerate}
The graph topologies are shown in Fig.\ref{topology}.

\begin{figure}[htbp]%
    \centering
    \subfloat[]{
        \label{topology:sensor}
        \includegraphics[width=0.45\linewidth]{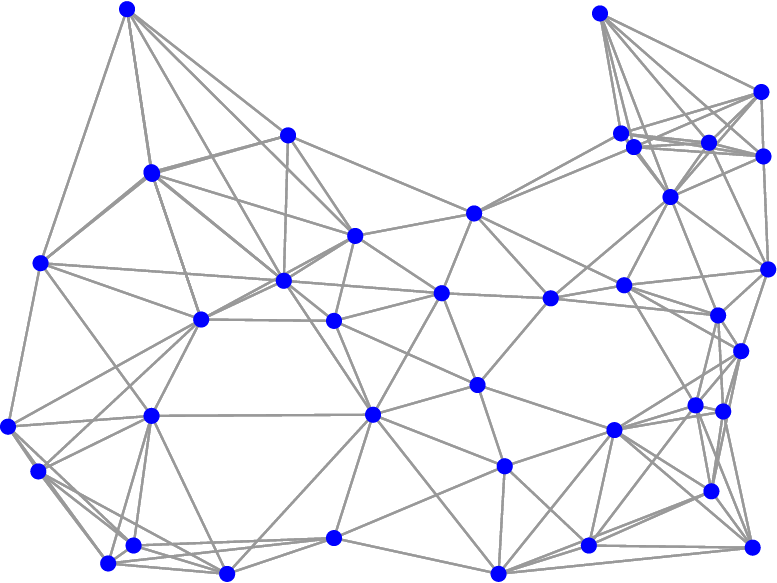}}
        \hfill
    \subfloat[]{
        \label{topology:ER}
        \includegraphics[width=0.45\linewidth]{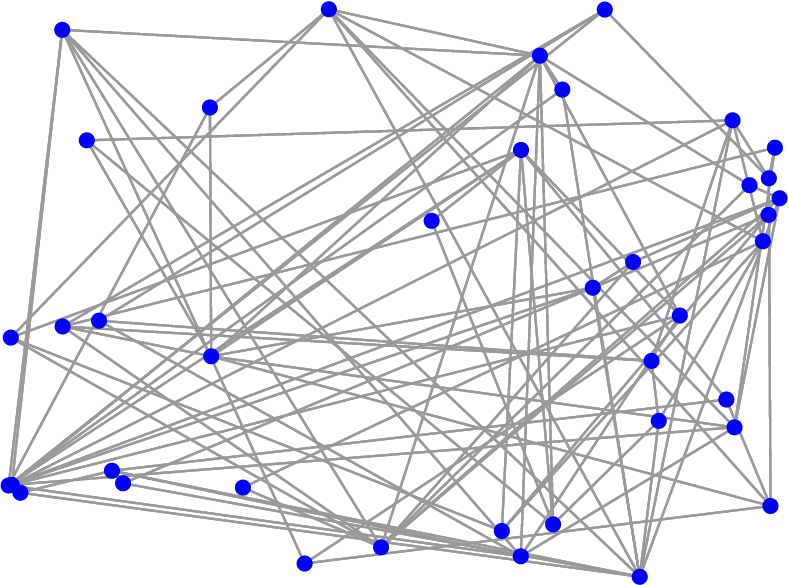}}
        \hfill
    \subfloat[]{
        \label{topology:SW}
        \includegraphics[width=0.45\linewidth]{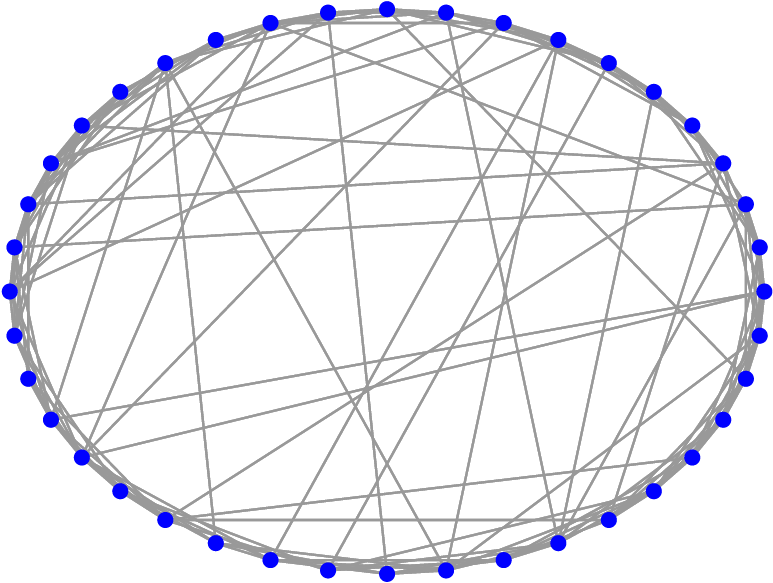}}
    \caption{The topologies of the (a) sensor graph, (b) ER graph, and (c) WS graph. 
    }
    \label{topology}
\end{figure}

On these graph topologies, set $B = 7$ and $f_1, f_2, \dots, f_B = 29, 30, \dots, 35$. The frequency coefficients are obtained from a 0-1 uniform distribution, and the corresponding bandlimited signals with unit norm are generated. Next, we examine the recovery quality of different sampling methods.

We first obtain the sampling sequences by the GSS algorithm and a few benchmarks, then acquire the corresponding sign information. 
After this, we select $K$ signals as the initial signals to recover the direction of the original signal by the UPOCS algorithm and compare the recovery performance of each sampling algorithm using the average error with the following metric
\begin{equation}
    \label{error in angle}
    \delta = \frac{1}{K}\mathop{\sum}\limits_{i=1}^K \arccos\langle \boldsymbol{x}^*,\hat{\boldsymbol{x}}_i \rangle.
\end{equation}
$\hat{\boldsymbol{x}}_i$ is the normalized recovery signal using the UPOCS algorithm for the $i$-th initial signal, and $\boldsymbol{x}^*$ is the direction of the original signal, which is also normalized. 
The $\arccos$ function measures the angle between the input vectors of unit length. The recovery has poor quality if $\delta$ is large.

We set $K=50$ and the number of iterations in the UPOCS algorithm is $10^4$. We first consider the case that only vertices are sampled. The results measured by $\delta$ are presented in \figref{fig:6_v}. 

\begin{figure}[htbp]%
    \centering
    \subfloat[]{
        \label{fig6:sensor}
        \includegraphics[width=0.75\linewidth]{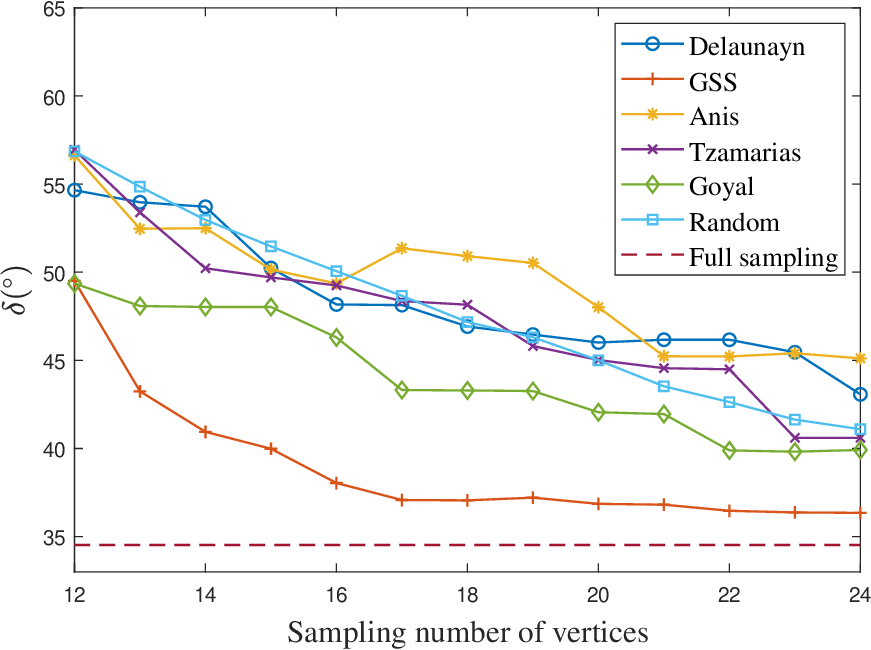}}
        \hfill
    \subfloat[]{
        \label{fig6:ER}
        \includegraphics[width=0.75\linewidth]{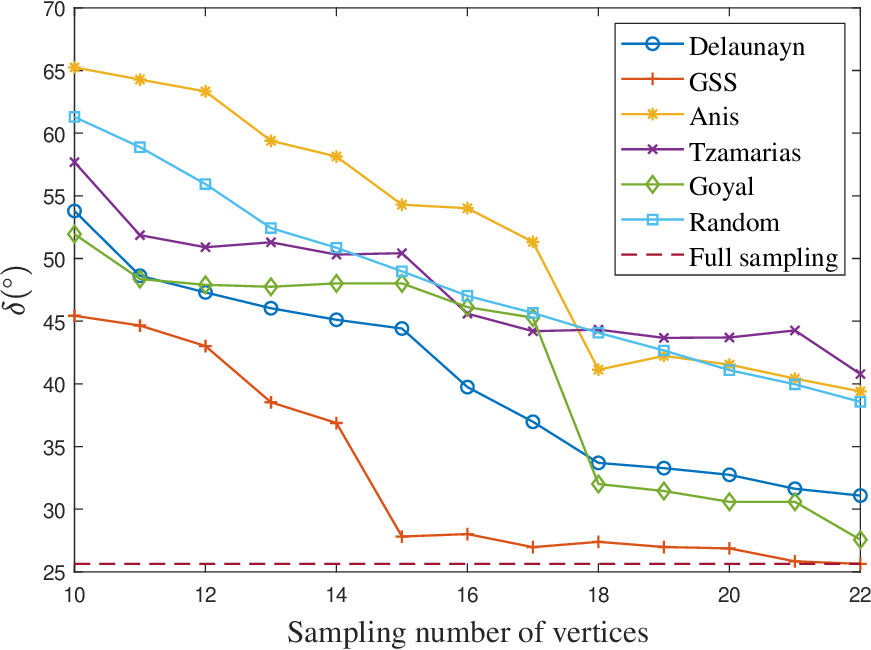}}
        \hfill
    \subfloat[]{
        \label{fig6:SW}
        \includegraphics[width=0.75\linewidth]{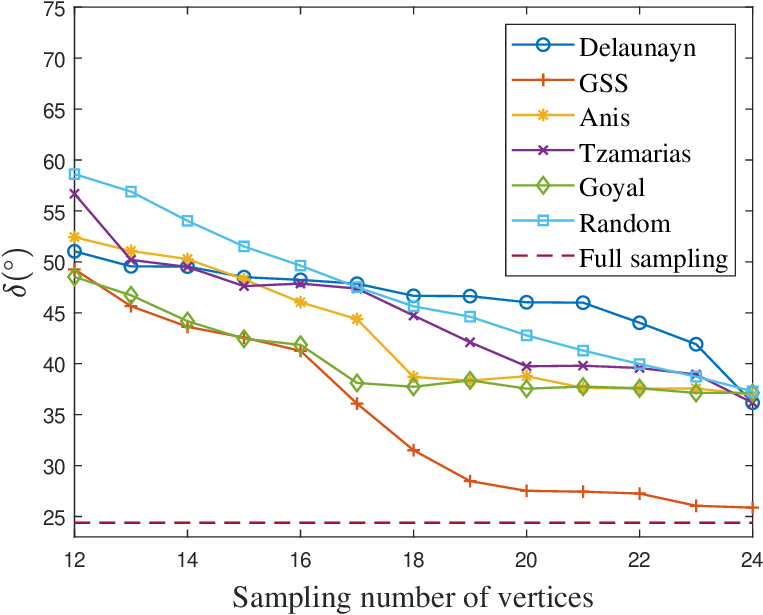}}
    \caption{Comparison of recovery performance between sampling algorithms for different sampling numbers when only vertices can be sampled on the (a) sensor graph, (b) ER graph, and (c) WS graph. 
    }
    \label{fig:6_v}
\end{figure}

In \figref{fig:6_v}, ``Delaunayn" refers to the method of simplex subdivision to estimate the volume in (\ref{equal volume}) \cite{henk2017basic}. We use the 'Delaunayn' function in MATLAB. 
``GSS" is our greedy sampling method presented in Algorithm \ref{alg:2}, where the EVs are calculated through pycddlib\footnote{Link: \!\url{https://pycddlib.readthedocs.io/en/latest/}}. 
``Anis" is the sampling framework of \cite{anis2014towards} for continuous graph signals. ``Tzamarias" is the method in \cite{tzamarias2018novel}, which is a novel sampling algorithm for continuous signals based on the concept of uniqueness set.
``Goyal" is the sampling method in \cite{goyal2018estimation} of selecting the indices corresponding to the row vectors in 
$\boldsymbol{U}_B$ with the largest lengths. 
Moreover, we averaged the recovery error over $50$ random sampling sets, plotted in \figref{fig:6_v} with the label ``Random". In addition, we also compare these methods with full sampling, which can be viewed as the ideal case that all the vertices are observed.  

\begin{figure}[h]%
    \centering
    \subfloat[]{
        \label{fig7:sensor}
        \includegraphics[width=0.7\linewidth]{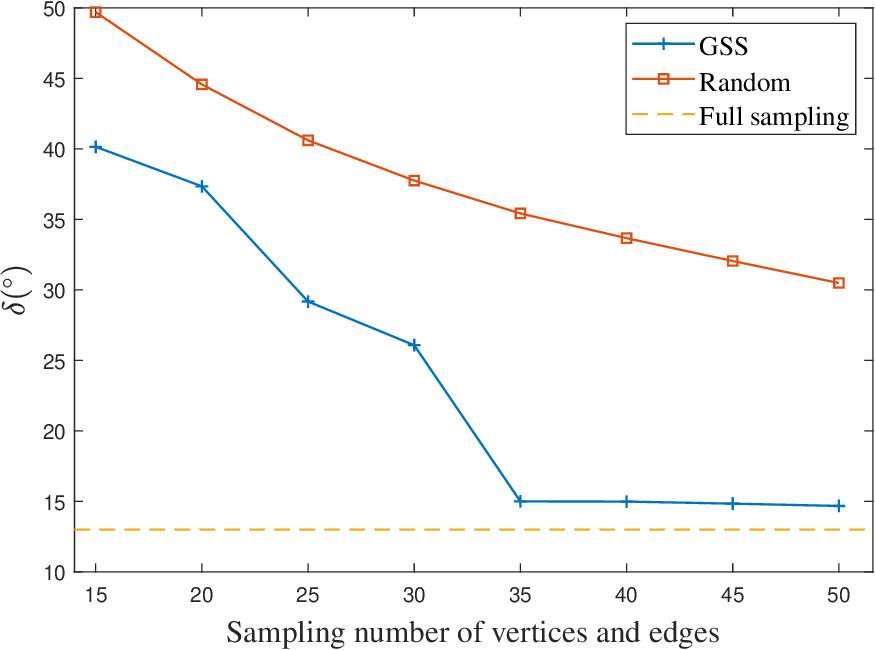}}
        \hfill
    \subfloat[]{
        \label{fig7:ER}
        \includegraphics[width=0.7\linewidth]{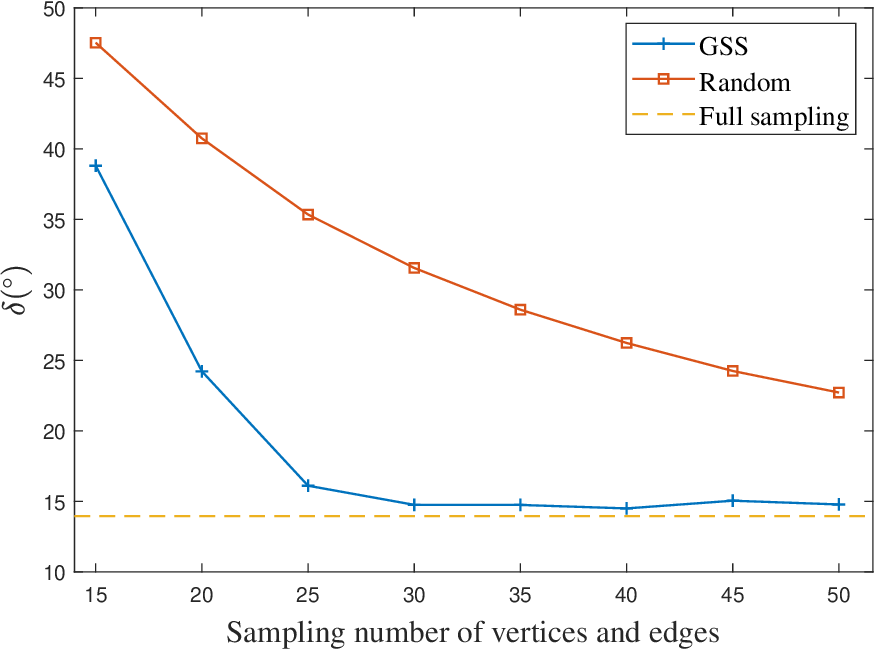}}
        \hfill
    \subfloat[]{
        \label{fig7:SW}
        \includegraphics[width=0.7\linewidth]{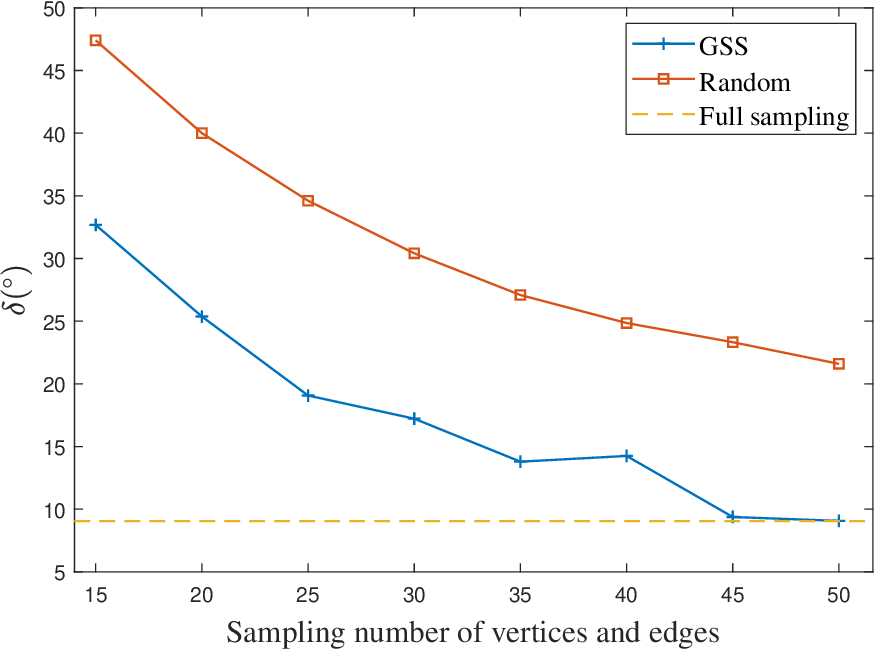}}
    \caption{Comparison of $\delta$ between sampling algorithms for different sampling numbers when both vertices and edges can be sampled on the (a) sensor graph, (b) ER graph, and (c) WS graph. 
    }
    \label{fig:7_ve}
\end{figure}

As shown in \figref{fig:6_v}, the GSS algorithm outperforms most benchmarks as the metric $\delta$ of the GSS algorithm is the lowest compared with other benchmarks, except ``Full sampling''. The result is very close to that of ``Full sampling'' even if only nearly half of the vertices are sampled and observed. 
We notice that the recovery performance of ``Anis" and ``Tzamarias" are even worse than random sampling, which indicates that these methods may not be suitable for signed sampling. Both ``Anis" and ``Tzamarias" consider the sampling of continuous values, focusing on the exact magnitude of the signal, while for signed sampling, the focus is on whether the signal is above a threshold rather than the exact magnitude. 

On the other hand, we consider the case that both vertices and edges can be sampled, which means there are $|\mathcal{V}|+|\mathcal{E}|$ samples to be selected. \figref{fig:7_ve} shows the average error in angle $\delta$ of the GSS algorithm, random sampling, and the ideal case that all the vertices and edges are sampled and observed. Similar to \figref{fig:6_v}, the GSS algorithm has a better performance than random sampling. Besides, with less than half of the total sample size, the GSS algorithm achieves a very close result to that of the ideal case (``Full sampling'').

\begin{figure}[htbp]%
    \centering
    \subfloat[]{
        \label{fig8:sensor}
        \includegraphics[width=0.75\linewidth]{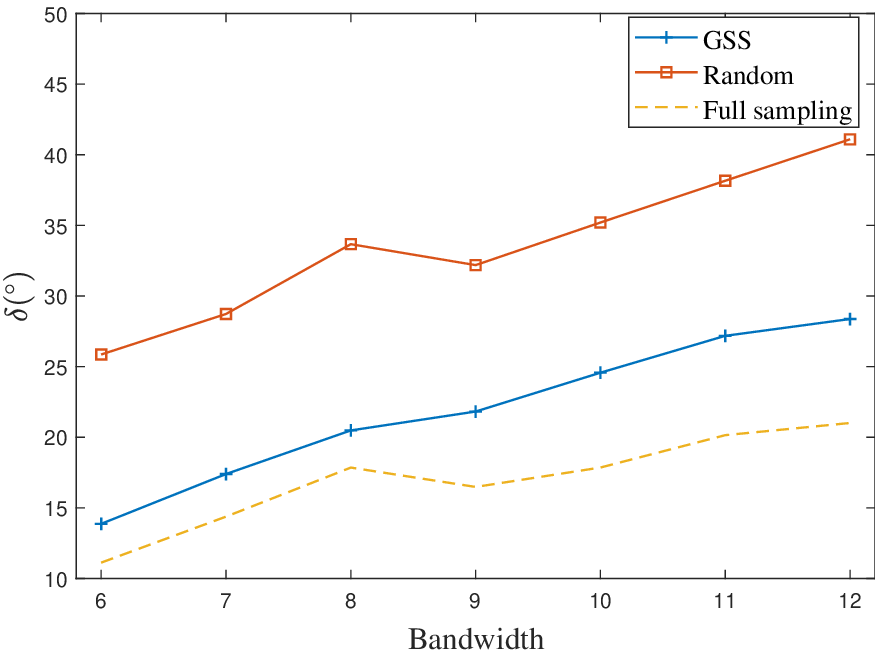}}
        \hfill
    \subfloat[]{
        \label{fig8:ER}
        \includegraphics[width=0.75\linewidth]{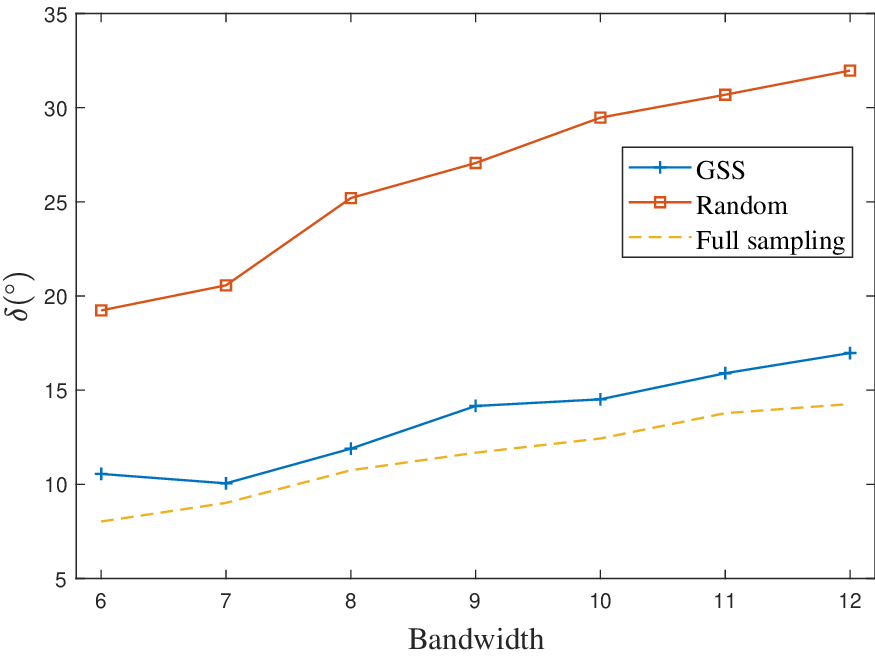}}
        \hfill
    \subfloat[]{
        \label{fig8:SW}
        \includegraphics[width=0.75\linewidth]{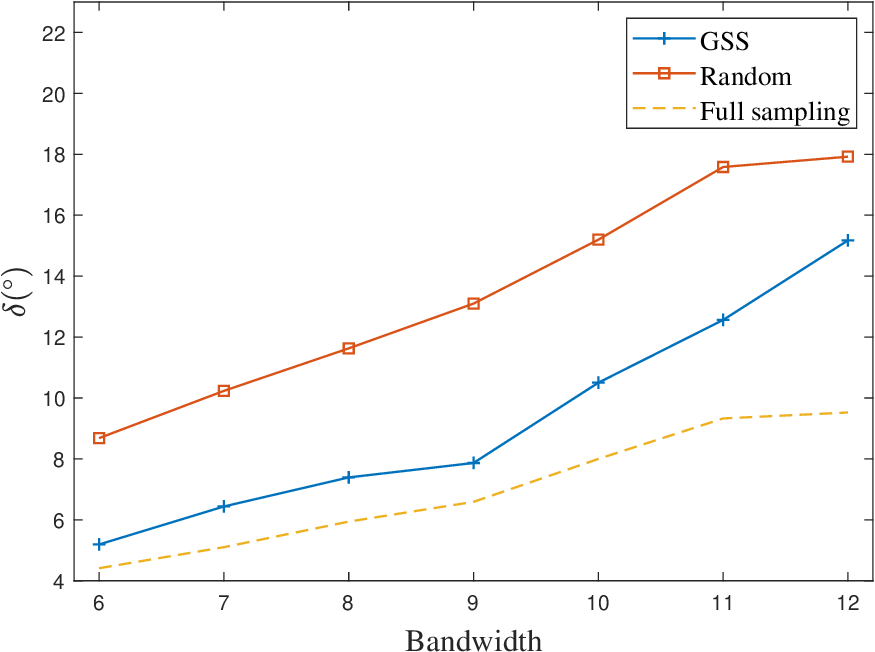}}
    \caption{Comparison of $\delta$ between sampling algorithms for different bandwidths and fixed sampling budget when both vertices and edges can be sampled on the (a) sensor graph, (b) ER graph, and (c) WS graph. 
    }
    \label{fig8_b}
\end{figure}

In addition, another experiment is conducted to explore the effect of bandwidth on recovery. We fix the sampling budget to $50$, and set several bandwidth values. For each bandwidth,  we generate $50$ signals on each graph and then recover them. The average results are presented in \figref{fig8_b}. It can be concluded that the larger the bandwidth of a signal, the more difficult it is to achieve satisfactory recovery with a given sampling budget.

To gain more intuition, the effect of bandwidth on the recovery can be further analyzed. Notice that the bandwidth determines the spatial dimension in which the sampling and recovery algorithms work.
Assuming that the expected recovery error (error in angle) does not exceed $\delta$, then the feasible region needs to be narrowed down to within the surface of the unit spherical cap with colatitude angle $\delta$ in $\mathbb{R}^B$. 
As can be found in \cite{li2010concise}, on the unit sphere, the area of a spherical cap in $B$-dimensional Euclidean space is
\begin{equation*}
    A_B^\text{cap} = \frac{1}{2}A_B I_{\sin^2\delta}\left(\frac{B-1}{2}, \frac{1}{2}\right),
\end{equation*}
where $A_B$ is the area of the unit sphere, and $I_x(a, b)$ is the regularized incomplete beta function. Intuitively, for bandwidth $B$, the difficulty of having the recovery error within $\delta$ is of the order $O\left(I_{\sin^2\delta}^{-1}\left(\frac{B-1}{2}, \frac{1}{2}\right)\right)$. By numerical experiments, $I_x(\frac{B-1}{2}, \frac{1}{2})$ is decreasing with respect to $B$ for a certain range of integers at $0 < x < 1$, as shown in Fig.~\ref{ratio}. Therefore, as the bandwidth increases, the ratio of the ideal feasible region becomes smaller, which requires more samples (constraints). This is consistent with our experimental results in Fig.~\ref{fig8_b}.

\begin{figure}[h]
    \centering
    \includegraphics[width=0.7\linewidth]{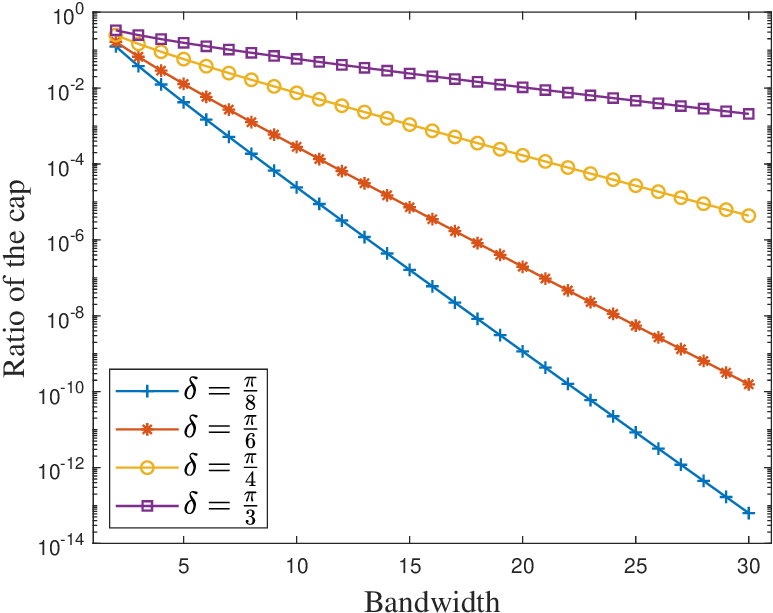}
    \caption{\label{ratio} The ratio between the area of the hyperspherical cap and that of the unit sphere under different bandwidths.}
\end{figure}

\subsection{Recovery and Classification of Realworld Data}
\label{realdata}
In this experiment, we consider the recovery and classification tasks on a real dataset. Specifically, we collect a dataset of ratings for movies and TV shows\footnote{Link:\!\url{https://www.kaggle.com/code/jyotmakadiya/popular-movies-and-tv-shows-data-analysis/data}} from Kaggle, where the ratings are numbers between $0$ and $10$ with one decimal place. 

\begin{figure}[h]
    \centering
    \includegraphics[width=0.7\linewidth]{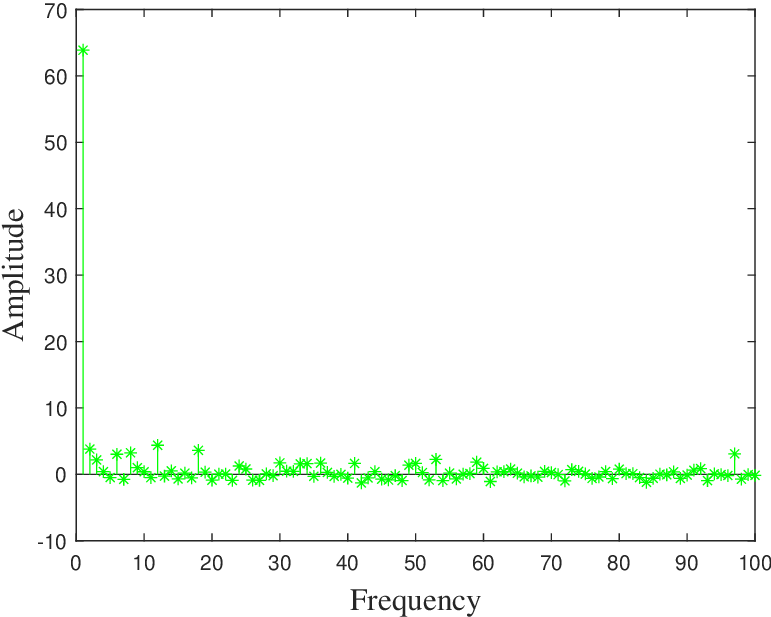}
    \caption{\label{spectral} Signal spectrum of the rating dataset in \ref{realdata}.}
\end{figure}

We construct a graph with $100$ vertices and $4596$ edges where each vertex corresponds to a movie or a TV show. The edge weights are generated according to the similarity of attributes of the vertices, such as the release date, and ages of the audience. We take these attributes as components of attribute vectors, and then compute the inner product of these vectors for each vertex pair as edge weights. 
To facilitate calculation and analysis, edges are sparsified appropriately and outliers are removed to ensure connectivity. The ratings are regarded as the original graph signal, whose (GSP) spectrum is shown in \figref{spectral}, which is concentrated in the low frequency region. We round up the signal value of each vertex as the label (ground truth), e.g., a vertex with a score of $4.8$ is labeled $5$.

Assume that the score range is known. Considering that the original signal is only approximately bandlimited and is positive on each vertex, to acquire the sign information, we carry out the following processing.
\newline (1) Remove the DC component (the component of $0$ frequency).
\newline (2) Choose frequency components with the largest amplitude. More specifically, we apply a bandpass filter to the signal, preserving the frequency components with the largest amplitudes. By choosing $B=13$, the energy after filtering accounts for about $75\%$ of the original energy. 

The estimation of the original signal is obtained as follows.
\newline (1) Perform signed sampling of vertices and edges on the resulting signal after the above pre-processing and recover it using the UPOCS algorithm ($K=30$ and $n_\text{max}=3000$). 
\newline (2) The DC component is added to the recovery signal.
\newline (3) Scale to the known score range.

\begin{figure}[htbp]%
    \centering
    \subfloat[]{
        \label{fig:10a}
        \includegraphics[width=0.7\linewidth]{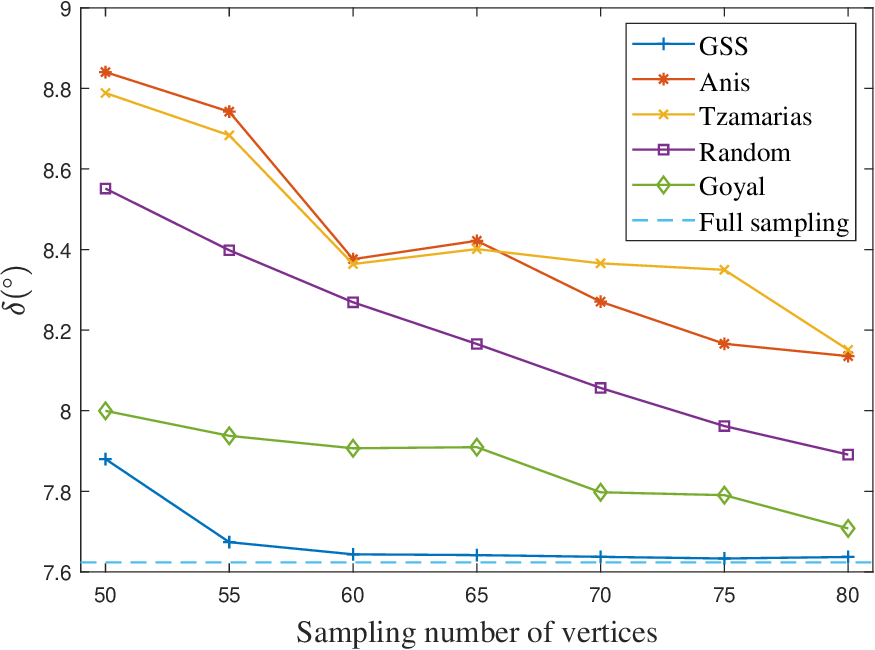}}
        \hfill
    \subfloat[]{
        \label{fig:10b}
        \includegraphics[width=0.7\linewidth]{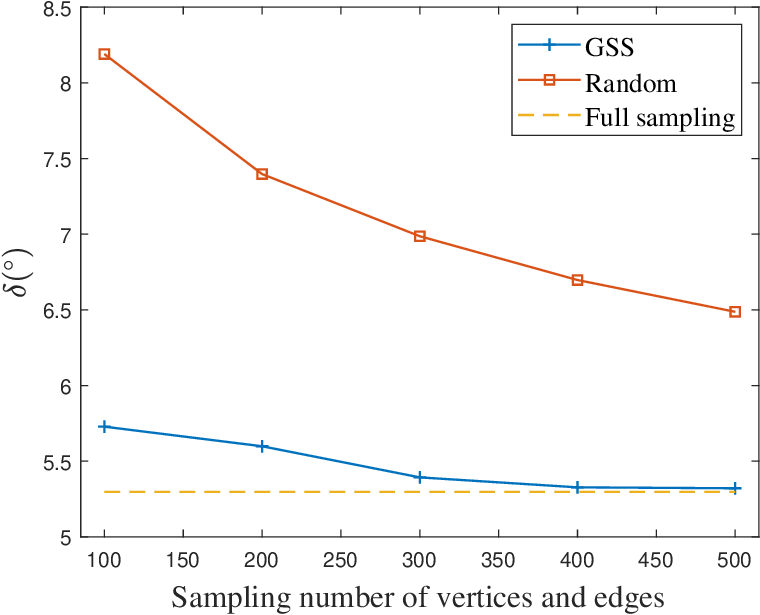}}
    \caption{ 
    Comparison of recovery performance in terms of $\delta$ among sampling algorithms for different sampling numbers on the rating dataset in \ref{realdata}. In (a), only vertices are sampled, and in (b), both vertices and edges are sampled.
    }
    \label{fig:10}
\end{figure}

The recovery quality can be evaluated by two metrics: $\delta$ in (\ref{error in angle}) and classification accuracy. 
We consider top1 and top2 classification cases. 
Top1 accuracy refers to the proportion of test instances,  whose respective classification category with the highest confidence is consistent with the label. 
Top2 accuracy refers to the proportion of test instances,  whose respective two classification categories with the highest confidence contain the label. 
 
As shown in \figref{fig:10}, the GSS algorithm consistently outperforms most benchmarks for the recovery task, in both cases that only vertices can be sampled and that vertices and edges can be sampled simultaneously. In addition, no matter the whole sample size is $|\mathcal{V}| = 100$ or $|\mathcal{V}|+|\mathcal{E}|=4696$, our approach has a good approximation of the ideal case (``Full sampling'').

\begin{figure}[h]
    \centering
    \includegraphics[width=0.7\linewidth]{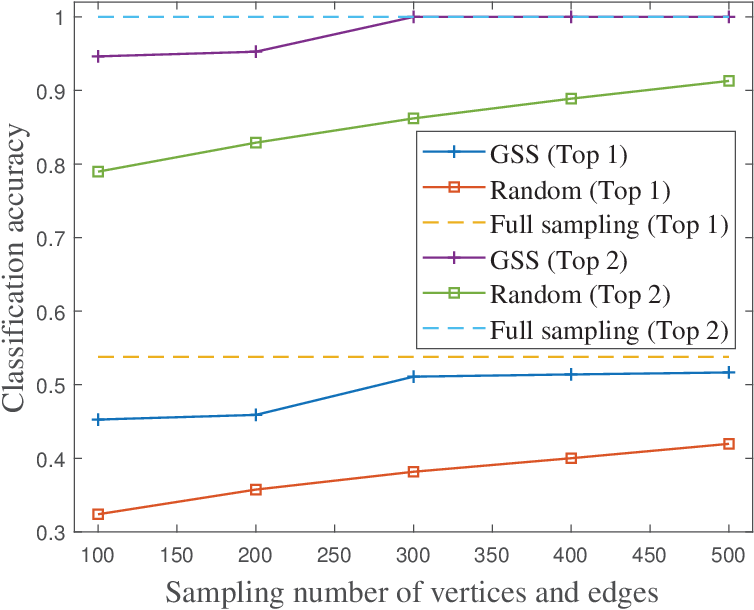}
    \caption{\label{fig:classification} Comparison of recovery performance in terms of the classification accuracy between sampling algorithms for different sampling numbers on the rating dataset in \ref{realdata}.}
\end{figure}

On the other hand, for the case that both vertices and edges can be sampled, the classification accuracy of our approach is also evaluated. The results presented in \figref{fig:classification} illustrate that the GSS algorithm offers much better performance than random sampling, and is comparable to the ideal case for a much smaller number of samples.

\section{Conclusion}
\label{sec:conclusion}
In this paper, we consider the scenario in that only the sign information of a bandlimited graph signal is observed. 
To perform sampling and recovery of the direction, we treat the sampling process as an online decision problem under the framework of MDP.
A greedy sampling algorithm and a recovery algorithm are proposed, which provide an efficient 
 and novel technical route for handling the scenarios with sign information. The methods are justified by theoretical results.
To demonstrate the effectiveness of the proposed methods, We conduct experiments including recovery and classification tasks with both synthetic and real datasets. In our future work, we try to solve the signed sampling problem with noise.


{\appendices

\section{Proof of the Lemma \ref{branch}}
\label{A-B}
Consider two partial realizations $O_{t_1}=(a_{0:t_1-1},y_{0:t_1-1})$, $O_{t_2}=(a_{0:t_2-1},y_{0:t_2-1})$, where $O_{t_1}$ is a subrealization of $O_{t_2}$.  
By (\ref{sign_space_closed}) and (\ref{eq:hmh}), $\hat{\mathcal{J}}(O_{t_1})$ and $\hat{\mathcal{J}}(O_{t_2})$ have the form
\begin{equation*}
\begin{aligned}
    \hat{\mathcal{J}}(O_{t_1}) = \left(\bigcap_{i=0}^{t_1}\{\boldsymbol{w} \mid\ \boldsymbol{\psi }_{a_i}^\top\boldsymbol{U}_B\boldsymbol{w}\lesseqqgtr y_i\}\right) \bigcap \mathcal{B}_B,\\
     \hat{\mathcal{J}}(O_{t_2}) = \left(\bigcap_{i=0}^{t_2}\{\boldsymbol{w} \mid\ \boldsymbol{\psi }_{a_i}^\top\boldsymbol{U}_B\boldsymbol{w}\lesseqqgtr y_i\}\right) \bigcap \mathcal{B}_B,
\end{aligned}
\end{equation*}
where $\mathcal{B}_B$ is the unit ball in $\mathbb{R}^B$ as in (\ref{body}), i.e. $\mathcal{B}_B = \{\boldsymbol{w} \in \mathbb{R}^B \mid \|\boldsymbol{w}\|_2 \leq 1 \}$. 
Obviously, we have $\hat{\mathcal{J}}(O_{t_2}) \subseteq \hat{\mathcal{J}}(O_{t_1})$. Therefore, $\varphi(O_{t_2}) \geq \varphi(O_{t_1})$, which proves the monotonicity. In other words, a shorter observation sequence corresponds to a state with a larger volume, because the feasible region is constructed from fewer constraints. 

 \section{Proof of the Lemma \ref{submodular}}
\label{A-C}

Firstly, we show that $\phi(O_t) \geq 0$. Note that $\hat{\mathcal{J}}_0$ can be viewed as $\hat{\mathcal{J}}(\emptyset)$. According to the monotonicity in Lemma \ref{branch}, for any partial realization $O_t$ with $a_{0:t-1}\not=\emptyset$, we have $\text{Vol}(\hat{\mathcal{J}}_0)\geq\text{Vol}(\hat{\mathcal{J}}(O_t))$, because $\hat{\mathcal{J}}_0$ is constructed by fewer constraints. So, $\phi(O_t) \geq 0$ holds.

For any partial realization $O_{t}=(a_{0:t-1},y_{0:t-1})$ (already observed), and $a \in \mathcal{A}\backslash a_{0:t-1}$ with sign $y_a$, the conditional expected marginal benefit of $a$ can be written as
\begin{equation*}
    \begin{aligned}
        &\Delta(a\mid O_{t})\\
        =&  \mathbb{E}\left[\phi(O_{t}\cup\{(a,y_a)\})-\phi(O_{t})\mid O_{t}\right]\\
         =&  \mathbb{E}\left[\text{Vol}\left(\hat{\mathcal{J}}(O_{t})\right) - \text{Vol}\left(\hat{\mathcal{J}}(O_{t}\cup \{(a,y_a)\})\right)\mid O_{t} \right].
    \end{aligned}
\end{equation*}
Whatever the sign of $a$ is, according to the monotonicity in Lemma \ref{branch}, we have $\phi(O_{t}\cup \{(a,y_a)\}) \geq \phi(O_{t})$ for all possible $y_a$. Thus, the expectation is non-negative, i.e., $\Delta(a\mid O_{t}) \geq 0$. The adaptive monotonicity holds.

Take two partial realizations $O_{t_1}=(a_{0:t_1-1},y_{0:t_1-1})$ and $O_{t_2}=(a_{0:t_2-1},y_{0:t_2-1})$ ($O_{t_1}$ is a subrealization of $O_{t_2}$). 
The conditional expected marginal benefit $\Delta(a\mid O_{t_1})$ of some $a\in \mathcal{A}\backslash\ a_{0:t_2-1}$ can be further written as
\begin{equation*}
    \begin{aligned}
        & \Delta(a\mid O_{t_1})\\
        = & \mathbb{E}\left[\text{Vol}\left(\hat{\mathcal{J}}(O_{t_1})\right) - \text{Vol}\left(\hat{\mathcal{J}}(O_{t_1}\cup \{(a,y_a)\})\right)\mid O_{t_1} \right]\\
        =&  \text{Vol}\left(\hat{\mathcal{J}}(O_{t_1})\right) - \mathbb{E}\left[ \text{Vol}\left(\hat{\mathcal{J}}(O_{t_1}\cup \{(a,y_a)\})\right)\mid O_{t_1} \right].
    \end{aligned}
\end{equation*}
According to the definition of the transition probability in Section~\ref{problem_MDP} and the fact that $\text{Vol}\left(\hat{\mathcal{J}}(O_{t_1}\cup \{(a,+)\})\right)+\text{Vol}\left(\hat{\mathcal{J}}(O_{t_1}\cup \{(a,-)\})\right)=\text{Vol}\left(\hat{\mathcal{J}}(O_{t_1})\right)$, we have
    \begin{align*}
        &\Delta(a\mid O_{t_1})  \\
        =& \text{Vol}\left(\hat{\mathcal{J}}(O_{t_1})\right) - \mathbb{E}\left[ \text{Vol}\left(\hat{\mathcal{J}}(O_{t_1}\cup \{(a,y_a)\})\right)\mid O_{t_1} \right]\\
         =& \text{Vol}\left(\hat{\mathcal{J}}({O}_{t_1})\right) -\frac{\text{Vol}\left(\hat{\mathcal{J}}(O_{t_1}\cup \{(a,+)\})\right)^2}{\text{Vol}\left(\hat{\mathcal{J}}({O}_{t_1})\right)}
         \\
        & -\frac{\text{Vol}\left(\hat{\mathcal{J}}(O_{t_1}\cup \{(a,-)\})\right)^2}{\text{Vol}\left(\hat{\mathcal{J}}({O}_{t_1})\right)}\\
         =& \frac{2}{\frac{1}{\text{Vol}\left(\hat{\mathcal{J}}(O_{t_1}\cup \{(a,+)\})\right)}+\frac{1}{\text{Vol}\left(\hat{\mathcal{J}}(O_{t_1}\cup \{(a,-)\})\right)}}.
    \end{align*}
Similarly, 
$\Delta(a\mid O_{t_2})$ can be written as
\begin{equation*}
        \Delta(a\mid O_{t_2}) = \frac{2}{\frac{1}{\text{Vol}\left(\hat{\mathcal{J}}(O_{t_2}\cup \{(a,+)\})\right)}+\frac{1}{\text{Vol}\left(\hat{\mathcal{J}}(O_{t_2}\cup \{(a,-)\})\right)}}.
\end{equation*}

According to Lemma \ref{branch}, since $O_{t_1} \cup \{(a,+)\}$, $O_{t_1} \cup \{(a,-)\}$ are a subrealization of $O_{t_2}\cup \{(a,+)\}$, $O_{t_2}\cup \{(a,-)\}$ respectively, we have 
\begin{equation*}
\begin{aligned}
    \text{Vol}\left(\hat{\mathcal{J}}(O_{t_1}\cup \{(a,+)\})\right) &\geq \text{Vol}\left(\hat{\mathcal{J}}(O_{t_2}\cup \{(a,+)\})\right),\\
    \text{Vol}\left(\hat{\mathcal{J}}(O_{t_1}\cup \{(a,-)\})\right) &\geq \text{Vol}\left(\hat{\mathcal{J}}(O_{t_2}\cup \{(a,-)\})\right).
\end{aligned}
\end{equation*}

Thus, $\Delta(a\mid O_{t_1}) \geq \Delta(a\mid O_{t_2})$. The adaptive submodularity holds.

\section{Proof of the Theorem \ref{theorem1}}
\label{A-A}

\begin{lemma}
  $\boldsymbol{P}$ is a firmly non-expansive operator.  
\end{lemma}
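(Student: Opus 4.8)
The plan is to recognize the operator $\boldsymbol{P}$ defined in (\ref{Pv}) and (\ref{Pe}) as the metric projection onto a closed convex set in $\mathbb{R}^B$, and then to invoke the classical fact that such projections are firmly non-expansive. Recall that $\boldsymbol{P}$ is firmly non-expansive if
\begin{equation*}
    \|\boldsymbol{P}\boldsymbol{x} - \boldsymbol{P}\boldsymbol{y}\|^2 \leq \langle \boldsymbol{x} - \boldsymbol{y},\, \boldsymbol{P}\boldsymbol{x} - \boldsymbol{P}\boldsymbol{y}\rangle \quad \text{for all } \boldsymbol{x},\boldsymbol{y} \in \mathbb{R}^B,
\end{equation*}
which is equivalent to the reflection $2\boldsymbol{P} - \boldsymbol{I}$ being non-expansive. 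So the whole task reduces to exhibiting $\boldsymbol{P}$ as a metric projection onto a suitable convex set.

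First I would make the identification precise for a vertex sample $v_j \in \mathcal{S}$. Writing $\boldsymbol{a} = (\boldsymbol{U}_B)_j^\top$ for the transposed $j$-th row of $\boldsymbol{U}_B$, the consistency constraint $(\boldsymbol{U}_B)_j\boldsymbol{w} \lesseqqgtr y_i$ describes the closed half-space $H = \{\boldsymbol{w} \in \mathbb{R}^B : \boldsymbol{a}^\top\boldsymbol{w} \geq 0\}$ when $y_i > 0$ (and $\{\boldsymbol{a}^\top\boldsymbol{w}\leq 0\}$ when $y_i<0$). The metric projection onto $H$ leaves $\boldsymbol{w}$ unchanged when it already satisfies the constraint, and otherwise moves it to the boundary hyperplane via $\boldsymbol{w} \mapsto \boldsymbol{w} - (\boldsymbol{a}^\top\boldsymbol{w}/\|\boldsymbol{a}\|^2)\,\boldsymbol{a} = \boldsymbol{w} - (\boldsymbol{a}\boldsymbol{a}^\top/\|\boldsymbol{a}\|^2)\boldsymbol{w}$. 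This is exactly (\ref{Pv}). The edge case (\ref{Pe}) is identical with $\boldsymbol{a} = [(\boldsymbol{U}_B)_p - (\boldsymbol{U}_B)_q]^\top$, and the degenerate case $y_i = 0$ yields the projection onto the hyperplane $\{\boldsymbol{a}^\top\boldsymbol{w} = 0\}$. In every regime $H$ is nonempty (it contains $\boldsymbol{0}$), closed, and convex, so $\boldsymbol{P} = P_H$ is a genuine metric projection.

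With the identification in hand, I would conclude by the standard result that the metric projection onto a nonempty closed convex subset of a Hilbert space is firmly non-expansive; this follows from the variational characterization $\langle \boldsymbol{x} - P_H\boldsymbol{x},\, \boldsymbol{v} - P_H\boldsymbol{x}\rangle \leq 0$ for all $\boldsymbol{v} \in H$, applied with $\boldsymbol{v} = P_H\boldsymbol{y}$ and symmetrically with the roles of $\boldsymbol{x},\boldsymbol{y}$ exchanged, then adding the two inequalities (see \cite{bauschke1996projection}). Each per-sample operator entering the iteration of Algorithm~\ref{alg:1} is of this form, which supplies precisely the non-expansiveness required by the convergence argument of Theorem~\ref{theorem1}.

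The main obstacle is the piecewise nature of the definition: one must verify that the stated formula coincides with $P_H$ across all three sign outcomes (in particular that inputs already on the boundary are correctly returned unchanged by the projection), and one must resist the tempting shortcut of arguing from pointwise norm preservation of $2\boldsymbol{P}-\boldsymbol{I}$ alone. Because $\boldsymbol{P}$ is nonlinear --- the active branch depends on which side of the hyperplane the input lies --- the identity $\|(2\boldsymbol{P}-\boldsymbol{I})\boldsymbol{w}\| = \|\boldsymbol{w}\|$ holding for every individual $\boldsymbol{w}$ does \emph{not} by itself give $\|(2\boldsymbol{P}-\boldsymbol{I})\boldsymbol{x} - (2\boldsymbol{P}-\boldsymbol{I})\boldsymbol{y}\| \leq \|\boldsymbol{x}-\boldsymbol{y}\|$. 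The safe route is the convex-projection characterization above, where the convexity of $H$ is what actually drives the estimate.
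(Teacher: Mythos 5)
Your proof is correct, and it takes a genuinely different---and in fact more watertight---route than the paper's. The paper's own proof unifies (\ref{Pv}) and (\ref{Pe}) into $\boldsymbol{P}\boldsymbol{w} = \boldsymbol{w} - (\boldsymbol{u}\boldsymbol{u}^\top/\|\boldsymbol{u}\|^2)\boldsymbol{w}$ on the inconsistent branch, observes that branchwise $\boldsymbol{P}$ is a linear orthogonal projection so that $2\boldsymbol{P}-\boldsymbol{I}$ has eigenvalues $\pm 1$ and operator norm $1$, concludes $\|(2\boldsymbol{P}-\boldsymbol{I})\boldsymbol{w}\| \leq \|\boldsymbol{w}\|$ for every $\boldsymbol{w}$, and then invokes \cite[Fact 1.3]{bauschke1996projection}. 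The pitfall you flag in your final paragraph is precisely the soft spot of that argument: since the active branch depends on the input, $\boldsymbol{P}$ is nonlinear, and the pointwise bound $\|(2\boldsymbol{P}-\boldsymbol{I})\boldsymbol{w}\| \leq \|\boldsymbol{w}\|$ does not by itself give the Lipschitz estimate $\|(2\boldsymbol{P}-\boldsymbol{I})\boldsymbol{x} - (2\boldsymbol{P}-\boldsymbol{I})\boldsymbol{y}\| \leq \|\boldsymbol{x}-\boldsymbol{y}\|$; the mixed case where $\boldsymbol{x}$ triggers the reflection branch and $\boldsymbol{y}$ does not needs the extra observation that the normal components $\boldsymbol{u}^\top\boldsymbol{x}$ and $\boldsymbol{u}^\top\boldsymbol{y}$ then have opposite signs, a step the paper never makes explicit. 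Your identification of $\boldsymbol{P}$ with the metric projection $P_H$ onto the closed half-space $\{\boldsymbol{w} : \boldsymbol{u}^\top\boldsymbol{w} \gtrless 0\}$ (or the hyperplane when $y_i = 0$), including the boundary check that points with $\boldsymbol{u}^\top\boldsymbol{w} = 0$ are fixed by both formulas, puts the convexity of $H$ exactly where it is needed and yields firm non-expansiveness directly from the variational inequality, without passing through the reflection at all. What the paper's route buys is brevity and a formulation aligned with the reflection framework of \cite{bauschke1996projection}; what yours buys is a complete and standard argument for POCS-type operators, and it plugs into the convergence proof of Theorem~\ref{theorem1} just as well.
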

\begin{proof}

Combining (\ref{Pv}) and (\ref{Pe}), the projection can be written in a unified form 
\begin{equation*}
    \boldsymbol{P}\boldsymbol{w}:=\left\{ \begin{array}{lr}
    \boldsymbol{w} - \frac{\boldsymbol{u}\boldsymbol{u}^\top}{\|\boldsymbol{u}\|^2} \boldsymbol{w}, &\quad \text{if some sign is inconsistent}; 
    \\
      \boldsymbol{w}, &  \text{otherwise},
    \end{array}
    \right.
\end{equation*}
where $\boldsymbol{u}$ is a $B\times 1$ vector, corresponding to $(\boldsymbol{U}_B)_j^\top$ in (\ref{Pv}) and $(\boldsymbol{U}_B)_p - (\boldsymbol{U}_B)_q$ in (\ref{Pe}). Let $\tilde{\boldsymbol{P}} = \frac{\boldsymbol{u}\boldsymbol{u}^\top}{\|\boldsymbol{u}\|^2}$, then 
$\boldsymbol{P} = \boldsymbol{I} - \tilde{\boldsymbol{P}}$. Note that $\tilde{\boldsymbol{P}}$ is a standard projection operator in Euclidean space, and so does $\boldsymbol{P}$. Obviously, $\boldsymbol{P}$ has only eigenvalues of $0$ and $1$, which means $2\boldsymbol{P}-\boldsymbol{I}$  has only eigenvalues of $-1$ and $1$, i.e., $\|2\boldsymbol{P}-\boldsymbol{I}\|=1$. Then for any $\boldsymbol{w}\in \mathbb{R}^B$, $\|(2\boldsymbol{P}-\boldsymbol{I})\boldsymbol{w}\|\leq  \|\boldsymbol{w}\|$. 

Therefore, $(2\boldsymbol{P}-\boldsymbol{I})$ is non-expansive, and $\boldsymbol{P}$ is firmly non-expansive \cite[Fact 1.3]{bauschke1996projection}.

\end{proof}

\begin{lemma}
  $\{\hat{\mathcal{C}}\}$ is boundedly linearly regular.
\end{lemma}
\begin{proof}
  For $\hat{\mathcal{C}}$, it is a closed convex cone in a finite-dimensional Hilbert space.
  By Proposition 5.4 and 5.9 in \cite{bauschke1996projection}, we can derive that the $\{\hat{\mathcal{C}}\}$ is boundedly linearly regular.
\end{proof}
  
  In step 4 of Algorithm \ref{alg:1}, each iteration is made up of a firmly non-expansive operation.
  According to the algorithm settings in \cite{bauschke1996projection}, it is not difficult to verify that this algorithm is cyclic, singular, and unrelaxed.
Furthermore, referring to Definition 4.8 in \cite{bauschke1996projection}, the algorithm is linearly focusing.
  
  With the help of Corollary 3.12 in \cite{bauschke1996projection}, it is proved that $\left\{\boldsymbol{h}_\mathit{n}\right\}$ converges linearly to some point in $\hat{\mathcal{C}}$.
  According to Theorem 5.7 in \cite{bauschke1996projection}, we can further conclude that the convergence rate is independent of 
  $\boldsymbol{h}_0$.

  \section{Proof of the Theorem \ref{main conclusion}}
  \label{A-D}
  Firstly, we show that the solution to Problem~(\ref{optimization}) can be well approximated by the greedy algorithm.
  
  Note that, the conditional expectation of $\phi$ over $O_t$ in Lemma \ref{submodular} conditioned on $\hat{\mathcal{J}}_0$ is exactly the objective function in (\ref{optimization}), i.e., $\mathbb{E}[\phi(O_t)\mid \hat{\mathcal{J}}_0]=\mathbb{E}[\phi(a_{0:t-1},y_{0:t-1})\mid \hat{\mathcal{J}}_0]=\Phi(a_{0:t-1})$. By the adaptive monotonicity and adaptive submodularity, the greedy approach can provide a 
  solution to Problem~(\ref{optimization}) that is comparable to the global optimum \cite{golovin2011adaptive,krause2014submodular}.  

To be specific, we maximize the expected reward for each step. Therefore, at each time step $t$, we can use the following greedy sampling scheme to solve the sampling sequence as
\begin{align}
    \label{greedy optimization}
        a_t^* &= \mathop{\text{argmax}}\limits_{a_t\in \mathcal{A}_t}\ \Delta(a_t \mid O_t^*)\nonumber \\
        &= \mathop{\text{argmax}}\limits_{a_t\in \mathcal{A}_t}\ -\mathbb{E}\left[\text{Vol}\left(\hat{\mathcal{J}}(O_t^* \cup \{(a_t,y_t)\})\right) \mid O_t^*\right].
\end{align}
In (\ref{greedy optimization}), the notations are consistent with (\ref{equal volume}). It has been proved in \cite{golovin2011adaptive} that the sampling sequence solved in an online manner by (\ref{greedy optimization}) can achieve the performance within $(1-\frac{1}{e})$ of the globally optimal solution of (\ref{optimization}).

Next, we show that (\ref{equal volume}) is equivalent to (\ref{greedy optimization}). 
Consider the cases $y_t=+$ and $y_t=-$, 
then we can rewrite (\ref{greedy optimization}) as
\begin{equation*}
\label{concrete optimization}
  a_t^*  = \mathop{\text{argmin}}\limits_{a_t\in \mathcal{A}_t}\ \sum_{y_t=+,-}\mathbb{P}(y_t\mid O_t^*)\text{Vol}\left(\hat{\mathcal{J}}(O_t^*\cup \{(a_t,y_t)\})\right),
\end{equation*}
where $\mathbb{P}(y_t\mid O_t^*)$ is defined in (\ref{prob_yt}).
 It is obvious that $\text{Vol}(\hat{\mathcal{J}}(O_t^*))=\text{Vol}\left(\hat{\mathcal{J}}(O_t^*\cup \{(a_t,+)\})\right) + \text{Vol}\left(\hat{\mathcal{J}}(O_t^*\cup \{(a_t,-)\})\right) $, we have

\begin{equation*}
    \begin{aligned}
        &\frac{\text{Vol}\left(\hat{\mathcal{J}}(O_t^*\cup \{(a_t,+)\})\right)^2}{\text{Vol}(\hat{\mathcal{J}}(O_t^*))} + \frac{\text{Vol}\left(\hat{\mathcal{J}}(O_t^*\cup \{(a_t,-)\})\right)^2}{\text{Vol}(\hat{\mathcal{J}}(O_t^*))}\\
    &\geq \frac{\text{Vol}\left(\hat{\mathcal{J}}(O_t^*\cup \{(a_t,+)\})\right)+\text{Vol}\left(\hat{\mathcal{J}}(O_t^*\cup \{(a_t,-)\})\right)}{2}\\
    & =\frac{\text{Vol}(\hat{\mathcal{J}}(O_t^*))}{2}.
    \end{aligned}
\end{equation*}
The LHS of the above inequality has a minimum (RHS) that is independent of $a_t$. For two real numbers with a constant sum, the smaller their difference, the smaller their square sum. So in order to minimize the LHS, we can minimize $\left|\text{Vol}\left(\hat{\mathcal{J}}(O_t^*\cup \{(a_t,+)\})\right) - \text{Vol}\left(\hat{\mathcal{J}}(O_t^*\cup \{(a_t,-)\})\right)\right|$. 
Therefore, (\ref{greedy optimization}) can be replaced equivalently by (\ref{equal volume}).


 }


\bibliographystyle{IEEEtran}
\bibliography{IEEEabrv,ref}

%

\newpage

\vfill

\end{document}